\def\diam{\mathop{\rm diam}\nolimits}
\def\spam{\mathop{\rm span}\nolimits}
\def\Pol{\mathop{\rm Pol}\nolimits}
\def\Re{\mathop{\rm Re}\nolimits}
\def\pmat#1{\begin{pmatrix}#1\end{pmatrix}}
\def\question#1{{\bf Question: }#1}
\def\question#1{}
\def\cB{{\cal B}}
\def\R{\mathbb{R}}
\def\CC{\mathbb{C}}
\def\NN{\mathbb{N}}
\def\FF{\mathbb{F}}
\def\HH{\mathbb{H}}
\def\OO{\mathbb{O}}
\def\ZZ{\mathbb{Z}}
\def\Cd{\C^d}
\def\Fd{\FF^d}
\def\Od{\OO^d}
\def\Hd{{\HH^d}}
\def\Rd{\R^d}
\def\C{\mathbb{C}}
\def\SS{\mathbb{S}}
\newcommand{\RR}{\mathbb{R}}
\newtheorem{theorem}{Theorem}[section]
\newtheorem{corollary}{Corollary}[section]
\newtheorem{lemma}{Lemma}[section]
\newtheorem{example}{Example}[section]
\newtheorem{proposition}{Proposition}[section]
\newenvironment{proof}{{\noindent \it
Proof.}}{\hfill$\Box$\medskip}
\newif\ifdraft\def\draft{\drafttrue\hoffset=.8truecm\showlabeltrue
\def\comment##1{{\bf comment: ##1}}
\headline={\sevenrm \hfill \ifx\filenamed\undefined\jobname\else\filenamed\fi%
(.tex) (as of \ifx\updated\undefined???\else\updated\fi)
 \TeX'ed at {\hour\time\divide\hour by 60{}%
\minutes\hour\multiply\minutes by 60{}%
\advance\time by -\minutes
\the\hour:\ifnum\time<10{}0\fi\the\time\  on \today\hfill}}
}
\def\inpro#1{\langle#1\rangle}
\def\ip#1{\langle\kern-.28em\langle#1\rangle\kern-.28em\rangle_\nu}
\def\cH{{\cal H}}
\def\norm#1{\Vert#1\Vert}
\def\openR{{{\rm I}\kern-.16em {\rm R}}}
\def\Fd{\FF^d}
\let\ga\alpha
\let\gb\beta
\let\gG\Gamma
\let\gd\delta
\let\gD\Delta
\let\gth\theta
\let\gl\lambda
\let\gs\sigma
\let\go\omega
\let\ga\alpha
\let\gG\Gamma
\let\gb\beta
\let\gd\delta
\let\gs\sigma
\def\inpro#1{\langle#1\rangle}
\def\dinpro#1{\langle\hskip-0.25em\langle#1\rangle\hskip-0.25em\rangle}
\def\dinpro#1{\langle\hskip-0.35em\langle#1\rangle\hskip-0.35em\rangle}
\def\dinpro#1{\langle\hskip-0.35em\langle\hskip-0.35em\langle#1\rangle\hskip-0.35em\rangle\hskip-0.35em\rangle}
\def\Hom{\mathop{\rm Hom}\nolimits}
\def\Implies{\hskip1em\Longrightarrow\hskip1em}
\def\formeq{\the\sectionno.\the\equationno}  
\def\elabel#1/#2/#3/{\global\advance\equationno by 1 %
\ifx#1\empty\else\emember#1%
\ifshowlabel\marginal{\string#1}\fi\fi%
\ifmmode\eqno{#3(\formeq#2)}\else#3\formeq#2\fi} 
\def\makeblanksquare#1#2{
\dimen0=#1pt\advance\dimen0 by -#2pt
      \vrule height#1pt width#2pt depth0pt\kern-#2pt
      \vrule height#1pt width#1pt depth-\dimen0 \kern-#1pt
      \vrule height#2pt width#1pt depth0pt \kern-#2pt
      \vrule height#1pt width#2pt depth0pt
}
\title{\bf 
A variational characterisation of projective spherical 
designs over the quaternions
}
\author{Shayne Waldron\\ }
\begin{document}

\maketitle 

\begin{abstract}

We give an inequality on the packing of vectors/lines in quaternionic Hilbert space $\Hd$,
which generalises those of Sidelnikov and Welch for unit vectors in $\Rd$ and $\Cd$.
This has a parameter $t$, and depends only on the vectors up to projective
unitary equivalence.
The sequences of vectors in $\Fd=\Rd,\Cd,\Hd$ that give equality, 
which we call spherical $(t,t)$-designs, are seen to satisfy a
cubature rule on the unit sphere in $\Fd$ for a suitable polynomial space $\Hom_{\Fd}(t,t)$.
Using this, we show that the projective spherical $t$-designs 
on the Delsarte spaces $\FF P^{d-1}$ coincide with
the spherical $(t,t)$-designs of unit vectors in $\Fd$.
We then explore a number of examples in quaternionic space. 
The unitarily invariant polynomial space $\Hom_{\Hd}(t,t)$ and the inner product that we define on it so the reproducing 
kernel has a simple form are of independent interest.


\end{abstract}

\bigskip
\vfill

\noindent {\bf Key Words:}
Sidelnikov inequality,
Welch bound/inequality,
Delsarte space,
projective spherical $t$-designs,
spherical $(t,t)$-designs,
harmonic polynomials,
reproducing kernels,
apolar inner product,
Bombieri inner product,
finite tight frames,
quaternionic equiangular lines,
projective unitary equivalence over the quaternions.

\bigskip
\noindent {\bf AMS (MOS) Subject Classifications:}
primary
05B25, \ifdraft Combinatorial aspects of finite geometries \else\fi
05B30, \ifdraft (Other designs, configurations) \else\fi
15B33, \ifdraft (Matrices over special rings (quaternions, finite fields, etc.) \else\fi
\quad
secondary
42C15, \ifdraft General harmonic expansions, frames  \else\fi
51M20, \ifdraft Polyhedra and polytopes; regular figures, division of spaces \else\fi
65D30, \ifdraft (Numerical integration) \else\fi

\vskip .5 truecm
\hrule
\newpage

\section{Introduction}

Two unit vectors $v_1$ and $v_2$ on the unit sphere in $\Fd=\Rd,\Cd$
(or the lines they represent) are spaced far apart if $|\inpro{v_1,v_2}|^2$
is {\em small}. The maximal possible separation $\inpro{v_1,v_2}=0$ occurs for
orthogonal vectors (lines). For a sequence $(v_j)$ of vectors in $\Fd$, and an
integer $t=1,2,\ldots$, the following inequality holds
\begin{equation}
\label{SidWelchineq}
\sum_{j=1}^n\sum_{k=1}^n |\inpro{v_j,v_k}|^{2t} \ge c_t(\Fd) \Bigl(
\sum_{\ell=1}^n \norm{v_\ell}^{2t}\Bigr)^2,
\end{equation}
where
\begin{equation}
\label{ctRCdefn}
c_t(\Rd)= {1\cdot3\cdot 5\cdots (2t-1)\over d(d+1)\cdots(d+2(t-1))}, \qquad
c_t(\Cd)={1\over{d+t-1\choose t}}.
\end{equation}
Sequences of vectors which give equality above can be thought of as being an
optimal packing (of well separated lines), e.g., an orthonormal basis 
gives equality for $t=1$. 
For unit vectors in $\Rd$ this inequality is due to Sidelnikov \cite{Si74},
and for unit vectors in $\Cd$ it is due to Welch \cite{W74}.
It can be shown that vectors giving equality in (\ref{SidWelchineq})
give a type of cubature rule \cite{W17}, which we call a spherical $(t,t)$-design.
From this, it follows that for any $t$ and $d$ there is equality in 
(\ref{SidWelchineq}) for some sufficiently large $n$. There is considerable interest 
in finding the smallest possible $n$ (for a given $t$ and $d$) 
\cite{HW20}, and even the case of $t$ not integral has been considered
\cite{CGGKO20}. Sequences giving this equality 
have long been used in information theory \cite{MM91}, \cite{DF07}.

The primary aim of this paper is to establish the analogue of
(\ref{SidWelchineq}) for quaternionic Hilbert space $\Hd$, and a corresponding
theory of spherical $(t,t)$-designs (cubature rules). One consequence of this, 
is that we show the projective spherical $t$-designs on the Delsarte spaces
$\FF P^{d-1}$, $\FF=\RR,\CC,\HH$ studied by Hoggar \cite{H82}, \cite{H84}
are precisely the spherical $(t,t)$-designs of unit vectors in $\Fd$. 
This gives a simple characterisation of projective spherical $t$-designs
which was previously unknown, and formally makes sense for the octonionic 
space $\OO^d$ also.

Our original intent was to extend the unified proof of (\ref{SidWelchineq})
given in \cite{W18} to quaternionic Hilbert space $\Hd$, to prove an 
inequality which was observed numerically (including the constant).
Since the quaternions $\HH$ are not commutative, this is nontrivial.
The special case $t=1$ corresponds to tight frames, and was treated in
\cite{W20}, where much of the needed theory of quaternionic Hilbert space 
was developed. Tensor products are central to the argument of \cite{W20}.
There is the ``commutativity relation''
\begin{equation}
\label{realcommute}
\Re(ab)=\Re(ba), \qquad \forall a,b\in\HH.
\end{equation}
which we hoped to apply (as in the case $t=1$) together with a 
theory of tensor products in quaternionic Hilbert space (see \cite{MW20})
to establish such an inequality.
Ultimately, this was unsuccessful, with our ``faux proof'' failing 
to clearly identify the polynomials for the cubature rule. Instead, we have adapted an 
argument of \cite{KP17} (for the complex case), which is both elegant (it uses Cauchy-Schwarz)
and insightful (equality naturally identifies the constant $c_t(\Hd)$ and the 
space $\Hom_{\Hd}(t,t)$ of polynomials for the cubature rule). 
We first present this argument for all three cases
(Theorem \ref{CSgeneralvarthm}),
and then show the existence of the inner product that it is predicated on 
(Theorems \ref{tightframeHomtttheorem} and \ref{Hdreproinproexist}).


We now give a brief summary of quaternionic Hilbert space and polynomials on it,
referring back to \cite{W20} as appropriate (also see \cite{GMP13}).

\section{Quaternionic Hilbert space}

We assume basic familiarity with 
the {\bf quaternions} $\HH$ which are an 
extension of the complex numbers $x+iy$ to a noncommutative associative algebra over the real numbers
(skew field) consisting of elements:
$$ q
=q_1+ q_2i +q_3 j +q_4 k 
=(q_1+ q_2i) +(q_3  +q_4 i)j
\in \HH, \qquad q_j\in\RR, $$
with the (noncommutative) multiplication given by
$$ i^2=j^2=k^2=-1, \quad
ij=k, \quad jk=i,\quad ki=j, \quad ji=-k,\quad kj=-i,\quad ik=-j. $$
The {\bf conjugate} and {\bf norm} of a quaternion $q=q_1+ q_2i +q_3 j +q_4 k\in\HH$ 
are given by 
$$ \overline{q}:=q_1-q_2i-q_3j-q_3k, \qquad
 |q|^2={q\overline{q}}=\overline{q}={q_1^2+q_2^2+q_3^2+q_4^2}. $$
Since the multiplication is not commutative, we must distinguish between left and 
right vector spaces (modules) over $\HH$. In \cite{W20}, we considered
$\Hd$ to be a right $\HH$-vector space (module), so that the usual rules of matrix
multiplication extend. 

We define a generalisation 
$\inpro{\cdot,\cdot}:\Hd\times\Hd\to\HH$ 
of the Euclidean inner product by  
\begin{equation}
\label{Euclideaninprodef}
\inpro{v,w}:=\sum_j \overline{v_j} w_j.
\end{equation}
Above we have used $j$ as an index, rather than as a quaternion unit, which is common practice, 
where no confusion can arise.
This {\bf inner product}
on a right $\HH$-vector space satisfies the defining conditions:

\begin{enumerate}
\item Conjugate symmetry: $\inpro{v,w}=\overline{\inpro{w,v}}$.
\item Linearity in the second variable:
$\inpro{u,v+w}=\inpro{u,v}+\inpro{u,w}$,
\\
\hbox{\hskip6.0truecm }
$\inpro{v,w\ga}=\inpro{v,w}\ga$, \\
which gives $\inpro{v\ga,w}=\overline{\ga}\inpro{v,w}$.
\item Positive definiteness: $\inpro{v,v}>0$, $v\ne 0$.
\end{enumerate}
Here linearity is in the second variable (a change of convention by
the author, to benefit from more natural formulas). Moreover, 
the {\bf Euclidean} inner product (\ref{Euclideaninprodef}) satisfies
$$ \inpro{\ga v,w}=\inpro{v,\overline{\ga}w}, \qquad\forall \ga\in\HH. $$
Much of the theory of the Euclidean inner product extends, including 
the notions of Hermitian and unitary matrices, Cauchy-Schwarz, and 
Gram-Schmidt orthogonalisation.

We now consider multivariate quaternionic polynomials $\Hd\to\HH$.
The {\bf quaternionic monomials} of degree $r$ are the
polynomials of the form
$$ q=(q_1,\ldots,q_d)\mapsto \ga_0 q_{j_1} \ga_1 q_{j_2} \ga_2 \cdots q_{j_{r-1}} \ga_{r-1} q_{j_r} \ga_r, \qquad
\ga_j\in\HH, \ j_1,\ldots, j_r \in \{1,\ldots,d\}. $$
Their $\HH$-span (as a right $\HH$-vector space) is $\Hom_r(\HH)$ the
{\bf homogeneous polynomials} of degree $r$, and $\Pol_n(\HH)$ the
{\bf polynomials} of degree $n$ is the $\HH$-span of the
homogeneous polynomials of degrees $\le n$. 


It is clear from the definitions, that the quaternionic polynomials are a graded
ring, i.e., the product of homogeneous polynomials of degrees $j$ and $k$ is
a homogeneous polynomial of degree $j+k$. To understand the dimensions of these
spaces, we write each coordinate $q_a$ of $q=(q_1,\ldots,q_d)\in\Hd$ as
$$ q_a=t_a+ix_a+jy_a+kz_a, \qquad t_a,x_a,y_a,z_a\in\RR, $$
and observe (see \cite{S79}) that
\begin{align}
\label{txyzeqns}
t_a &= {1\over4} ( q_a - iq_ai - jq_aj - kq_ak ), \qquad
x_a = {1\over4i}( q_a - iq_ai + jq_aj + kq_ak ), \cr
y_a &= {1\over4j}( q_a + iq_ai - jq_aj + kq_ak ), \qquad
z_a = {1\over4k}( q_a + iq_ai + jq_aj - kq_ak ).
\end{align}
Hence $t_a,x_a,y_a,z_a$ are homogeneous monomials (in $q_a$), as are $\overline{q_a}$ and $|q_a|^2=q_a\overline{q_a}$.

Every monomial of degree $r$ can be written as a homogeneous polynomial of
degree $r$ in the $4d$ (real) variables $t_a,x_a,y_a,z_a$, $1\le a\le d$, 
with quaternionic coefficients.
The monomials in these $4d$ real variables are linearly independent over $\HH$ by the usual
argument (of taking Taylor coefficients), and so we have
\begin{equation}
\label{Homrdim}
\dim_\HH(\Hom_r(\Hd))=\dim_\RR(\Hom_r(\RR^{4d}))= {r+4d-1\choose 4d-1}, 
\end{equation}
\begin{equation}
\label{Polnrdim}
\dim_\HH(\Pol_n(\Hd))=\dim_\RR(\Pol_n(\RR^{4d}))= {n+4d\choose 4d}.
\end{equation}
It is, at times, convenient and insightful to treat the cases $\FF=\RR,\CC,\HH$ simultaneously, 
with
\begin{equation}
\label{mdef}
m=m_\FF:=\dim_\RR(\FF) 
=\begin{cases}
1, & \FF=\RR; \cr
2, & \FF=\CC; \cr
4, & \FF=\HH.
\end{cases}
\end{equation}
throughout this paper, e.g.,
$$ \dim_\FF(\Hom_r(\Fd))=\dim_\RR(\Hom_r(\RR^{md}))= {r+md-1\choose md-1}. $$
We define a subspace of $\Hom_{2t}(\Fd)$ by
\begin{equation}
\label{Homttdefn}
\Hom_{\Fd}(t,t):=\spam\{ |\inpro{v,\cdot}|^{2t}:v\in\Fd\}, \qquad t=1,2,\ldots.
\end{equation}
Since $|\inpro{v,\cdot}|^{2t}$ maps $\Fd$ to $\RR$, we may take the span over $\RR$
or $\FF$, with the dimension unchanged. For $U:\Fd\to\Fd$ unitary, 
$$|\inpro{v,U\cdot}|^{2t}=|\inpro{U^*v,\cdot}|^{2t}, $$
so that $\Hom_{\Fd}(t,t)$ is a unitarily invariant subspace.
See Section \ref{RepHomttSect} for further detail.


\section{Integration on the real, complex and quaternionic spheres}

Though it is not immediately apparent from the inequality (\ref{SidWelchineq}) itself, 
those vectors giving equality provide discrete approximations to surface area measure on the real and complex
spheres (this is clear from Sidelnikov \cite{Si74}, but not Welch \cite{W74}).

We now provide the basic theory of integration on the sphere, 
and calculate the constant $c_t(\Hd)$, which is an average over the quaternionic sphere.
Let
$$ \SS=\SS(\Fd):=\{x\in\Fd:\norm{x}=1\} = \{x\in\RR^{md}:\norm{x}=1\} $$
be the unit sphere in $\Fd$, and 
$\gs$ be the surface area measure on $\SS$,
normalised so that $\gs(\SS)=1$.
We note that surface area measure invariant under unitary maps on $\Fd$, 
i.e., for $U$ unitary
$$ \int_{\SS(\Fd)} f(Ux)\,d\gs(x)=\int_{\SS(\Fd)} f(x)\,d\gs(x), \qquad\forall f. $$
This follows from the result for $\RR^{md}$ and the fact that the unitary maps on 
$\Fd$ correspond to a subgroup of the unitary maps on $\RR^{md}$.
Moreover, for any pair of unit vectors $x,y\in\Fd$ there is a unitary map $U$
with $y=Ux$. To prove this, take $x=e_1$ and use Gram-Schmidt and the fact that
that unitary matrices have orthonormal columns (which extend to $\Hd$).
From these observations, it follows that there is a constant $c_t(\Fd)$ with
\begin{equation}
\label{ctFdprop}
\int_{\SS(\Fd)} |\inpro{x,y}|^{2t}\,d\gs(x) = \norm{y}^{2t} c_t(\Fd), \qquad
\forall y\in\Fd.
\end{equation}
We now calculate $c_t(\Fd)$ using the well known integrals for the monomials
$$ \int_{\SS(\Rd)} x^{2\ga}\,d\gs(x) = {({1\over2})_\ga\over({d\over2})_{|\ga|}}, \qquad
\ga\in\ZZ_+^d, $$
where $x^\ga=\prod_j x_j^{\ga_j}$ and $(x)_\ga=\prod_j(x_j)_{\ga_j}$, with
$(a)_n=a(a+1)\cdots(a+n-1)$. 

\begin{lemma} 
\label{ctcalclemma}
The constant of (\ref{ctFdprop}) for $\FF=\RR,\CC,\HH$ is given by
\begin{equation}
\label{ctFdvalue}
c_t(\Fd)  
= c_{t,m}
= {({m\over2})_t\over({md\over2})_t} 
=\prod_{j=0}^{t-1}{m+2j\over md+2j}, \qquad
m:=\dim_\RR(\FF). 
\end{equation}
It satisfies $c_t(\RR)=c_t(\CC)=c_t(\HH)=1$ and
$$ c_1(\Rd)=c_1(\Cd)=c_1(\Hd)={1\over d}, \qquad
c_t(\Rd)>c_t(\Cd)>c_t(\Hd), \quad t>1,\ d>1. $$
\end{lemma}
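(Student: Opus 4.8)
The plan is to reduce the integral (\ref{ctFdprop}) to a monomial integral on the real sphere and then evaluate the resulting combinatorial sum. Since (\ref{ctFdprop}) shows that $c_t(\Fd)$ does not depend on $y$ (once normalised by $\norm{y}^{2t}$), I would take $y=e_1$, so that $\inpro{x,e_1}=\overline{x_1}$ and hence $|\inpro{x,e_1}|^2=|x_1|^2$. Writing the first coordinate as $x_1=u_1+u_2i+u_3j+u_4k$ (truncated appropriately for $\FF=\RR,\CC$), we have $|x_1|^2=u_1^2+\cdots+u_m^2$, the sum of the squares of the first $m$ of the $md$ real coordinates of $x\in\SS(\RR^{md})$. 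Identifying $\SS(\Fd)$ with $\SS(\RR^{md})$ therefore gives
$$ c_t(\Fd)=\int_{\SS(\RR^{md})}\Bigl(u_1^2+\cdots+u_m^2\Bigr)^t\,d\gs(u). $$

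Next I would expand by the multinomial theorem and integrate term by term using the monomial formula stated above with real dimension $md$. Each surviving monomial is $\prod_{l=1}^m u_l^{2\gb_l}$ with $\gb\in\ZZ_+^m$, $|\gb|=t$, and its integral is $\prod_{l=1}^m({1\over2})_{\gb_l}/({md\over2})_t$, so
$$ c_t(\Fd)={1\over({md\over2})_t}\sum_{|\gb|=t}\binom{t}{\gb}\prod_{l=1}^m \Bigl({1\over2}\Bigr)_{\gb_l}. $$
The crux is then the rising-factorial Vandermonde identity
$$ \sum_{|\gb|=t}\binom{t}{\gb}\prod_{l=1}^m (a_l)_{\gb_l}=(a_1+\cdots+a_m)_t, $$
which with $a_1=\cdots=a_m={1\over2}$ gives $({m\over2})_t$. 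Hence $c_t(\Fd)=({m\over2})_t/({md\over2})_t$, and writing the Pochhammer quotient as $\prod_{j=0}^{t-1}(m/2+j)/(md/2+j)$ and then multiplying numerator and denominator of each factor by $2$ yields the stated product $\prod_{j=0}^{t-1}(m+2j)/(md+2j)$.

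The supplementary assertions follow by inspection of this product. When $d=1$ the numerator and denominator of every factor coincide, so $c_t(\FF)=1$; when $t=1$ only the single factor $j=0$ survives, giving $m/(md)=1/d$. For the strict inequalities I would fix $t>1$, $d>1$ and view $c_{t,m}$ as a function of $m$. The factor with $j=0$ is the constant $1/d$, while for $j\ge1$ a short cross-multiplication (whose sign is governed by $(d-1)$) shows that $(m+2j)/(md+2j)$ is strictly decreasing in $m$. Since $t>1$ forces at least the $j=1$ factor to be present, the whole product is strictly decreasing in $m$, and evaluating at $m=1,2,4$ gives $c_t(\Rd)>c_t(\Cd)>c_t(\Hd)$.

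The only non-routine ingredient is the rising-factorial Vandermonde identity, which I expect to be the main (though modest) obstacle: I would first establish the two-term case $\sum_{k=0}^n\binom{n}{k}(a)_k(b)_{n-k}=(a+b)_n$ (by a coefficient comparison or a short induction on $n$) and then induct on the number of blocks $m$. As an aside, one can bypass the identity entirely by recognising $u_1^2+\cdots+u_m^2$ as Beta$({m\over2},{m(d-1)\over2})$-distributed under $\gs$, which reproduces $({m\over2})_t/({md\over2})_t$ directly; but given that the monomial integrals are the tool already in hand, the combinatorial route is the natural one. Everything else — the reduction to $y=e_1$, the multinomial expansion, and the monotonicity check — is routine bookkeeping.
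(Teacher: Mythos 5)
Your evaluation of $c_t(\Fd)$ is exactly the paper's: take $y=e_1$, identify $|x_1|^2$ with $u_1^2+\cdots+u_m^2$ on $\SS(\RR^{md})$, expand by the multinomial theorem, integrate the monomials, and collapse the resulting sum via $\sum_{|\gb|=t}\binom{t}{\gb}\prod_{l}({1\over2})_{\gb_l}=({m\over2})_t$. The paper simply cites this last step as the ``multivariate Rothe theorem'', where you propose to prove it by induction from the two-block case; your identity is correct (most quickly seen by comparing coefficients of $x^t$ in $\prod_l(1-x)^{-a_l}=(1-x)^{-(a_1+\cdots+a_m)}$), and your Beta-distribution aside is an equally valid shortcut. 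Where you genuinely diverge is in the strict inequalities $c_t(\Rd)>c_t(\Cd)>c_t(\Hd)$ for $t>1$, $d>1$: the paper cites \cite{W18} (Exercise 6.9) for the first and proves the second by induction on $t$ via the ratio identity $c_t(\Hd)/c_t(\Cd)=\bigl(c_{t-1}(\Hd)/c_{t-1}(\Cd)\bigr)\bigl(1-{(t-1)(d-1)\over(t+2d-1)t}\bigr)$, whereas you treat $m$ as the variable and note that each factor $(m+2j)/(md+2j)$ with $j\ge1$ is strictly decreasing in $m$ when $d>1$ (cross-multiplying gives the difference $2j(m-m')(1-d)<0$ for $m>m'$), while the $j=0$ factor is the constant $1/d$; since $t>1$ guarantees the $j=1$ factor is present, the product strictly decreases as $m$ runs through $1,2,4$. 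This factor-wise monotonicity argument is correct, handles both inequalities uniformly and self-containedly, and is arguably cleaner than the paper's hybrid of citation plus induction; the paper's route buys only consistency with its reference.
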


\begin{proof} Take $x=z$, $y=e_1$ in (\ref{ctFdprop}) and
use the multinomial formula, to obtain
\begin{align*}
c_t(\Fd) 
&=\int_{\SS(\Fd)} (|z_1|^2)^{t}\,d\gs(z)
= \int_{\SS(\RR^{md})} (x_1^2+\cdots+x_m^2)^t\,d\gs(x) \cr
& = \int_{\SS(\RR^{md})} \sum_{|\ga|=t\atop\ga\in\ZZ_+^m}  {t\choose\ga} x^{2\ga}
\,d\gs(x) 
= \sum_{|\ga|=t\atop\ga\in\ZZ_+^m}  
{t\choose\ga} {({1\over2})_\ga\over({md\over2})_{|\ga|} }
= {({m\over2})_t\over({md\over2})_t}.
\end{align*}
Here the final simplification follows from  
the multivariate Rothe theorem.

The strict inequality 
$c_t(\Rd)>c_t(\Cd)$ is given in \cite{W18} (Exercise 6.9). 
We adapt the method used there for the second inequality. 
Since
$$ {c_t(\Hd)\over c_t(\Cd)} = 
{c_{t-1}(\Hd)\over c_{t-1}(\Cd)} 
{ {4+2(t-1)\over 4d+2(t-1)} \over {2+2(t-1)\over 2d+2(t-1)} } 
= {c_{t-1}(\Hd)\over c_{t-1}(\Cd)} 
\Bigl( 1 -{(t-1)(d-1)\over(t+2d-1)t}\Bigr), $$
the strict inequality holds by induction on $t$.
\end{proof}

The value (\ref{ctFdvalue})
coincides with formulas of (\ref{ctRCdefn}) for $\Rd,\Cd$, and
$$ c_t(\Hd) 
=\prod_{j=0}^{t-1}{4+2j\over 4d+2j}
= {2\cdot 3\cdots (t+1)\over 2d (2d+1)\cdots (2d+t-1)}
= {t+1\over{2d+t-1\choose t}}. $$


\section{The variational inequality}

To prove our quaternionic version of the Sidelnikov--Welch inequality (\ref{SidWelchineq}),
we require the existence 
of an inner product $\inpro{\cdot,\cdot}_\HH$ on $\Hom_{\Hd}(t,t)$ for which 
$$K_w(z):=|\inpro{w,z}|^{2t}$$
is the reproducing kernel, i.e.,
$$ \inpro{K_w,f}_\HH=f(w), \qquad\forall f\in\Hom_{\Hd}(t,t), \quad \forall w\in\Hd. $$
Such an inner product does exist (see Theorems  \ref{tightframeHomtttheorem} and \ref{Hdreproinproexist}). 
For our purposes, it is not necessary to know it explicitly 
(it follows from the reproducing property), or even the dimension of $\Hom_{\Hd}(t,t)$
(which is not obvious). 
We also take as given, the existence of such an inner product 
$\inpro{\cdot,\cdot}_\FF$ for $\Hom_{\Fd}(t,t)$, for $\FF=\RR,\CC$ also (which is well known), 
i.e.,
\begin{equation}
\label{reprokernprop}
\inpro{K_w,f}_\FF= f(w), \qquad \forall f\in\Hom_{\Fd}(t,t), \quad \forall w\in\Fd.
\end{equation}

We now prove a generalised form of (\ref{SidWelchineq})
given by Sidelnikov \cite{Si74} for $\FF=\RR$
and Kotelina and Pevnyi \cite{KP17} for $\FF=\CC$,
by using the method of the latter.

\begin{theorem}
\label{CSgeneralvarthm}
Let $\mu$ be a Borel measure on $X\subset\Fd$, $\FF=\RR,\CC,\HH$, 
$\int_X \norm{x}^{2t}\,d\mu(x)<\infty$, and $t\in\NN$.
Then
\begin{equation}
\label{WelchSidWald}
\int_X\int_X |\inpro{x,y}|^{2t}\,d\mu(x)\,d\mu(y) 
\ge c_t(\Fd) \Bigl( \int_X \norm{z}^{2t}\,d\mu(z)\Bigr)^2,
\end{equation}
with equality if and only if
\begin{equation}
\label{WelchSidWaldequalI}
{1\over\int_X \norm{x}^{2t}\, d\mu(x)}
 \int_X |\inpro{w,z}|^{2t}\,d\mu(w) = c_t(\Fd) \norm{z}^{2t}, \qquad\forall z\in\Fd,
\end{equation}
which is equivalent to the cubature rule
\begin{equation}
\label{WelchSidWaldequalcuberule}
\int_\SS f\, d\gs = {1\over\int_X \norm{x}^{2t}\, d\mu(x)} \int_X f(w)\,d\mu(w), \qquad\forall f\in\Hom_{\Fd}(t,t).
\end{equation}
There is equality in (\ref{WelchSidWald}) for $X=\SS$, $\mu=\gs$ and certain
finitely supported measures.
\end{theorem}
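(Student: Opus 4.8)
The plan is to exploit the reproducing kernel inner product directly, following the Cauchy--Schwarz argument of \cite{KP17}. Write $K_w=|\inpro{w,\cdot}|^{2t}\in\Hom_{\Fd}(t,t)$ for the kernel and define
\[
F:=\int_X K_w\,d\mu(w),
\]
an element of the finite-dimensional space $\Hom_{\Fd}(t,t)$; this Bochner integral converges because $K_w$ is homogeneous of degree $2t$ in $w$, so $\norm{K_w}_\FF\le C\norm{w}^{2t}$ on $X$ and the hypothesis $\int_X\norm{x}^{2t}\,d\mu<\infty$ controls it. Since the kernel is symmetric, $|\inpro{x,y}|^{2t}=K_x(y)=\inpro{K_y,K_x}_\FF$ by the reproducing property, so interchanging integral and inner product shows that the left-hand side of (\ref{WelchSidWald}) equals $\inpro{F,F}_\FF=\norm{F}_\FF^2$.

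Next I would identify the right-hand side as an inner product against a second distinguished vector. The function $G:=\norm{\cdot}^{2t}$ lies in $\Hom_{\Fd}(t,t)$: indeed (\ref{ctFdprop}) says exactly that $G=\frac1{c_t(\Fd)}\int_\SS K_x\,d\gs(x)$, an average of kernels. The reproducing property then yields
\[
\int_X\norm{z}^{2t}\,d\mu(z)=\int_X G(z)\,d\mu(z)=\inpro{F,G}_\FF,
\]
and specialising this to $\mu=\gs$ (for which $F$ becomes $c_t(\Fd)G$ and $\int_\SS\norm{z}^{2t}\,d\gs=1$) gives $1=c_t(\Fd)\inpro{G,G}_\FF$, i.e. $\norm{G}_\FF^2=1/c_t(\Fd)$. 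Cauchy--Schwarz, $\inpro{F,G}_\FF^2\le\norm{F}_\FF^2\norm{G}_\FF^2$, now reads
\[
\Bigl(\int_X\norm{z}^{2t}\,d\mu(z)\Bigr)^2\le \norm{F}_\FF^2\cdot\frac1{c_t(\Fd)},
\]
which rearranges to (\ref{WelchSidWald}).

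For the equality case I would use that Cauchy--Schwarz is sharp precisely when $F=\gl G$ for some real $\gl\ge0$, i.e. $\int_X|\inpro{w,z}|^{2t}\,d\mu(w)=\gl\norm{z}^{2t}$ for all $z$. Integrating this identity over $z\in\SS$ against $\gs$ and using (\ref{ctFdprop}) together with Fubini fixes $\gl=c_t(\Fd)\int_X\norm{x}^{2t}\,d\mu(x)$, which is exactly (\ref{WelchSidWaldequalI}). To obtain the equivalence with the cubature rule (\ref{WelchSidWaldequalcuberule}), test that rule on $f=K_z$: its left side is $\int_\SS|\inpro{z,w}|^{2t}\,d\gs(w)=c_t(\Fd)\norm{z}^{2t}$ by (\ref{ctFdprop}) and its right side is the $\mu$-average of $|\inpro{z,w}|^{2t}$, so the two agree iff (\ref{WelchSidWaldequalI}) holds; since $\{K_z:z\in\Fd\}$ spans $\Hom_{\Fd}(t,t)$ by definition (\ref{Homttdefn}) and both sides of (\ref{WelchSidWaldequalcuberule}) are linear in $f$, this extends to all $f$ in the space.

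Finally, the closing assertion: for $X=\SS$, $\mu=\gs$ one has $\int_\SS\norm{x}^{2t}\,d\gs=1$ and (\ref{ctFdprop}) is precisely (\ref{WelchSidWaldequalI}), so equality holds; and finitely supported equality measures exist because $\Hom_{\Fd}(t,t)$ is a finite-dimensional space of continuous functions, whence a positive, finitely supported cubature rule for $\gs$ on this space exists (Tchakaloff/Carath\'eodory), and by the equivalence just proved any such rule attains equality. The main obstacle is foundational rather than computational: the entire argument is predicated on the existence of the reproducing kernel inner product $\inpro{\cdot,\cdot}_\HH$ on $\Hom_{\Hd}(t,t)$, where quaternionic non-commutativity genuinely bites; once that is granted (Theorems \ref{tightframeHomtttheorem} and \ref{Hdreproinproexist}), the only care needed here is the legitimacy of the Bochner-integral interchanges and of treating $\norm{\cdot}^{2t}$ as a genuine element of the finite-dimensional space, both of which follow from homogeneity and the integrability hypothesis uniformly in $\FF=\RR,\CC,\HH$.
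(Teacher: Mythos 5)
Your proposal is correct and follows essentially the same route as the paper's own proof: the Kotelina--Pevnyi Cauchy--Schwarz argument in the reproducing-kernel inner product, with the same two vectors $F=\int_X K_w\,d\mu(w)$ and $\norm{\cdot}^{2t}=\frac{1}{c_t(\Fd)}\int_\SS K_x\,d\gs(x)$, the same determination of the proportionality constant by integrating against $\gs$, and the same spanning-by-kernels argument for the cubature equivalence. Even your closing remarks (equality for $\mu=\gs$, finitely supported designs via a Tchakaloff-type result, and the reliance on Theorems \ref{tightframeHomtttheorem} and \ref{Hdreproinproexist} for the quaternionic inner product) mirror the paper exactly.
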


\begin{proof} We define polynomials $f$ and $\go_t$ in $\Hom_{\Fd}(t,t)$ by 
$$ f(z) := \int_X K_w(z)\,d\mu(w), \qquad \go_t(z):=\norm{z}^{2t}. $$
The integral defining $f(z)$ converges, since Cauchy-Schwarz gives
$$ \int_X |K_w(z)|\,d\mu(w) 
\le \int_X (\norm{w}\norm{z})^{2t}\,d\mu(w)
= \norm{z}^{2t} \int_X \norm{w}^{2t}\,d\mu(w) <\infty. $$
These are in $\Hom_{\Fd}(t,t)$, since $K_w\in\Hom_{\Fd}(t,t)$ and by
(\ref{ctFdprop}), respectively.

For the (apolar) inner product (\ref{reprokernprop}), we have
\begin{align*}
\inpro{f,\go_t}_\FF 
&= \inpro{\int_X K_w\,d\mu(w),\norm{\cdot}^{2t}}_\FF
= \int_X \inpro{ K_w ,\norm{\cdot}^{2t}}_\FF \,d\mu(w)
= \int_X \norm{w}^{2t} \,d\mu(w), \cr
\inpro{f,f}_\FF 
&= \inpro{\int_X K_x\,d\mu(x), \int_X K_y\,d\mu(y) }_\FF
= \int_X \int_X \inpro{ K_x , K_y }_\FF \,d\mu(x) \,d\mu(y) \cr
&= \int_X \int_X  |\inpro{x,y}|^{2t} \,d\mu(x) \,d\mu(y), \cr
\inpro{\go_t,\go_t}_\FF
&= \inpro{{1\over c_t(\Fd)} \int_\SS K_x\,d\gs(x),\norm{\cdot}^{2t} }_\FF
= {1\over c_t(\Fd)} \int_\SS \inpro{ K_x ,\norm{\cdot}^{2t} }_\FF \,d\gs(x) \cr
&= {1\over c_t(\Fd)} \int_\SS \norm{x}^{2t} \,d\gs(x) = {1\over c_t(\Fd)}.
\end{align*}
The integral formula for $\go_t$ used in the last equation above is (\ref{ctFdprop}). 

Thus the inequality (\ref{WelchSidWald}) is given by the Cauchy-Schwarz inequality (which also holds 
for quaternionic Hilbert space) in the form
$$ \inpro{f,f}_\FF \ge {1\over\inpro{\go_t,\go_t}_\FF } 
(\inpro{f,\go_t}_\FF)^2, $$
with equality if and only if $f$ and $\go_t$ are scalar multiples of each other, i.e.,
$$ \int_X |\inpro{w,z}|^{2t}\, d\mu(w) = C \norm{z}^{2t}. $$
The scalar $C$ above can be determined by integrating with respect to 
$\gs$, and using (\ref{ctFdprop})
\begin{align*}
 C
&= \int_\SS C \norm{z}^{2t}\,d\gs(z)
= \int_\SS \int_X |\inpro{w,z}|^{2t}\, d\mu(w)\,d\gs(z) 
= \int_X \int_\SS |\inpro{w,z}|^{2t}\, d\gs(z) \, d\mu(w) \cr
&= \int_X  c_t(\Fd) \norm{w}^{2t}\, d\mu(w)
= c_t(\Fd) \int_X  \norm{x}^{2t}\, d\mu(x). 
\end{align*}
and so we obtain the condition (\ref{WelchSidWaldequalI}) for equality.

By homogeneity, (\ref{WelchSidWaldequalI}) holds if and only if it holds
for $z\in\SS$, i.e.,
$$ {1\over\int_X \norm{x}^{2t}\, d\mu(x)}
 \int_X  K_z(w) \,d\mu(w) = c_t(\Fd) = \int_\SS K_z\, d\gs
, \qquad\forall z\in\SS(\Fd), $$
which is (\ref{WelchSidWaldequalcuberule}) for $f=K_z=|\inpro{z,\cdot}|^{2t}$.
Since the integral is linear, and $\{K_z:z\in\SS(\Fd)\}$ spans
$\Hom_{\Fd}(t,t)$, we obtain the equivalent condition (\ref{WelchSidWaldequalcuberule}).

It is easy to verify that there is equality in 
(\ref{WelchSidWald}) for $X=\SS$, $\mu=\gs$, by using (\ref{ctFdprop}),
or to observe that (\ref{WelchSidWaldequalcuberule}) holds trivially.
It follows from a result of \cite{SZ84} that (\ref{WelchSidWaldequalcuberule})
holds for a finitely supported measure.
\end{proof}

\section{Spherical $(t,t)$-designs}

Let $\gd_v$ be the Dirac $\gd$-measure concentrated at $v\in\Fd$.
A finitely supported measure
$$ \mu = \sum_{j=1}^n  w_j \gd_{v_j}, \qquad v_j\in\Fd,\ v_j\ne0, \quad w_j, > 0$$
will be called a {\bf spherical $(t,t)$-design} for $\Fd$ (or $\SS$) if it
gives equality in (\ref{WelchSidWald}), i.e., by (\ref{WelchSidWaldequalcuberule}),
$$ \int_\SS f\,d\gs = C \sum_j w_j f(v_j), \qquad\forall f\in\Hom_{\Fd}(t,t), $$
for a fixed constant $C$. 
Since $\Hom_{\Fd}(t,t)\subset\Hom_{2t}(\Fd)$,
we have
$$  C \sum_j w_j f(v_j) =  C \sum_j f\bigl( (w_j)^{1\over 2t}v_j\bigr)
= C \sum_j w_j \norm{v_j}^{2t} f\bigl({v_j\over\norm{v_j}}\bigr), \qquad
\forall f\in\Hom_{\Fd}(t,t), $$
so that the measure $\mu = \sum_j  w_j \gd_{v_j}$ giving a spherical 
$(t,t)$-design could be replaced by one where the weights $w_j$ are $1$,
or the vectors $v_j$ have unit length, i.e.,
$$ \sum_j \gd_{(w_j)^{1\over 2t}v_j}, \qquad
\sum_j \norm{v_j}^{2t} \gd_{{v_j\over\norm{v_j}}}. $$
The particular choice taken (there are many others) makes not essential
difference to the theory of spherical $(t,t)$-designs,
and we consider all such $(t,t)$-designs as equivalent.
There is some crossover with the theory of ``Euclidean $t$-designs'', 
which, in addition, seek to integrate polynomials of lower degree, and some of these 
measures (that we consider equivalent) may correspond to Euclidean designs
(see \cite{HW20}).
 
Sometimes, it is convenient for us to 
 ``normalise'' by choosing the weights to be $1$, or the vectors to be in $\SS$.
We now give the corresponding presentations of Theorem \ref{CSgeneralvarthm}.

\begin{corollary}
\label{SidWelchWalIcorollary}
Fix $t\in\NN$. Let $v_1,\ldots,v_n$ be vectors in $\Fd$, 
$\FF=\RR,\CC,\HH$, not all zero. Then
\begin{equation}
\label{SidWelchWalineqexI}
\sum_{j=1}^n\sum_{k=1}^n |\inpro{v_j,v_k}|^{2t} \ge c_t(\Fd) \Bigl(
\sum_{\ell=1}^n \norm{v_\ell}^{2t}\Bigr)^2,
\end{equation}
with equality when one of the following equivalent conditions hold
\begin{enumerate}[\rm(a)]
\item The generalised Bessel identity
\begin{equation}
\label{SidWelchWalIBes}
c_t(\Fd) \norm{x}^{2t} = {1\over\sum_{\ell=1}^n \norm{v_\ell}^{2t}} \sum_{j=1}^n|\inpro{v_j,x}|^{2t},
\qquad \forall x\in\Fd. 
\end{equation}
\item The cubature rule for $\Hom_{\Fd}(t,t)$
\begin{equation}
\label{SidWelchWalIcubature}
\int_\SS f\,d\gs = {1\over\sum_{\ell=1}^n\norm{v_\ell}^{2t}}   \sum_{j=1}^n f(v_j), \qquad
\forall f\in \Hom_{\Fd}(t,t).
\end{equation}
\end{enumerate}
\end{corollary}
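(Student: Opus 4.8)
The plan is to obtain this corollary as a direct specialisation of Theorem \ref{CSgeneralvarthm} to a finitely supported measure, with essentially no new work beyond a change of notation. First I would take $X=\{v_1,\ldots,v_n\}\subset\Fd$ and set $\mu=\sum_{j=1}^n\gd_{v_j}$, the sum of unit point masses at the given vectors. Since $v_1,\ldots,v_n$ are not all zero, $\mu$ is a nonzero nonnegative Borel measure, and the integrability hypothesis of the theorem holds automatically: $\int_X\norm{x}^{2t}\,d\mu(x)=\sum_{\ell=1}^n\norm{v_\ell}^{2t}<\infty$, being a finite sum.

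Next I would translate each integral appearing in Theorem \ref{CSgeneralvarthm} into its discrete counterpart under this $\mu$. The double integral on the left of (\ref{WelchSidWald}) becomes
$$ \int_X\int_X|\inpro{x,y}|^{2t}\,d\mu(x)\,d\mu(y)=\sum_{j=1}^n\sum_{k=1}^n|\inpro{v_j,v_k}|^{2t}, $$
while $\int_X\norm{z}^{2t}\,d\mu(z)=\sum_{\ell=1}^n\norm{v_\ell}^{2t}$, so the right-hand side of (\ref{WelchSidWald}) becomes $c_t(\Fd)\bigl(\sum_{\ell=1}^n\norm{v_\ell}^{2t}\bigr)^2$. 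Thus (\ref{WelchSidWald}) is verbatim the claimed inequality (\ref{SidWelchWalineqexI}), and it follows the instant the substitution is recorded.

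For the equality conditions I would read off the two characterisations in the theorem under the same substitution. Condition (\ref{WelchSidWaldequalI}) becomes the generalised Bessel identity (\ref{SidWelchWalIBes}) of part (a), and the cubature rule (\ref{WelchSidWaldequalcuberule}) becomes (\ref{SidWelchWalIcubature}) of part (b). Their equivalence need not be reproved here: it is already part of the conclusion of Theorem \ref{CSgeneralvarthm}, where (\ref{WelchSidWaldequalI}) and (\ref{WelchSidWaldequalcuberule}) are shown equivalent by reducing to $z\in\SS$ via homogeneity and using that $\{K_z:z\in\SS(\Fd)\}$ spans $\Hom_{\Fd}(t,t)$.

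I do not anticipate any genuine obstacle, since this is a translation of a proven statement from the language of measures to that of finite sequences. The only points requiring care are bookkeeping: confirming the nonnegativity and finiteness hypotheses of Theorem \ref{CSgeneralvarthm} (both trivial here), and noting that the hypothesis \emph{not all zero} is exactly what guarantees $\sum_{\ell=1}^n\norm{v_\ell}^{2t}>0$, so that the normalising denominators in (\ref{SidWelchWalIBes}) and (\ref{SidWelchWalIcubature}) are well defined.
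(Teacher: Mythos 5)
Your proposal is correct and is exactly the paper's own proof, which reads simply ``Take $\mu=\sum_j \gd_{v_j}$ in Theorem \ref{CSgeneralvarthm}''; you have merely spelled out the routine bookkeeping (finiteness, nonvanishing of the normaliser) that the paper leaves implicit.
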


\begin{proof}
Take $\mu=\sum_j \gd_{v_j}$ in Theorem \ref{CSgeneralvarthm}.
\end{proof}

In light of the above, we will say that $(v_j)\subset\Fd$ is a 
{\bf spherical $(t,t)$-design} if
\begin{equation}
\label{vjvarcharofttdesigns}
\sum_{j=1}^n\sum_{k=1}^n |\inpro{v_j,v_k}|^{2t} 
= c_t(\Fd) \Bigl( \sum_{\ell=1}^n \norm{v_\ell}^{2t}\Bigr)^2.
\end{equation}


\begin{corollary}
\label{SidWelchWalIIcorollary}
Fix $t\in\NN$. Let $v_1,\ldots,v_n$ be unit vectors in $\Fd$,
$\FF=\RR,\CC,\HH$, and $(w_j)$ be nonnegative weights with $\sum_j w_j=1$. Then
\begin{equation}
\label{SidWelchWalineqexII}
\sum_{j=1}^n\sum_{k=1}^n w_jw_k |\inpro{v_j,v_k}|^{2t} \ge c_t(\Fd),
\end{equation}
with equality when one of the following equivalent conditions hold
\begin{enumerate}[\rm(a)]
\item The generalised Bessel identity
\begin{equation}
\label{SidWelchWalIIBes}
c_t(\Fd) \norm{x}^{2t} = \sum_{j=1}^n w_j|\inpro{v_j,x}|^{2t}, \qquad \forall x\in\Fd.
\end{equation}
\item The cubature rule for $\Hom_{\Fd}(t,t)$
\begin{equation}
\label{SidWelchWalIIcubature}
\int_\SS f\,d\gs = 
  \sum_{j=1}^n w_j f(v_j), \qquad \forall f\in \Hom_{\Fd}(t,t).
\end{equation}
\end{enumerate}
\end{corollary}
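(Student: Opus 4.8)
The plan is to obtain this corollary as an immediate specialisation of Theorem \ref{CSgeneralvarthm}, exactly as the previous corollary was, but with a measure adapted to the present normalisation. First I would set $\mu = \sum_{j=1}^n w_j \gd_{v_j}$, where the $v_j$ are the given unit vectors and the $w_j$ are the given weights. The finiteness hypothesis $\int_X \norm{x}^{2t}\,d\mu(x) < \infty$ holds trivially, since the support is finite.

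The key observation is that the normalising quantity appearing throughout Theorem \ref{CSgeneralvarthm}, namely $\int_X \norm{z}^{2t}\,d\mu(z)$, collapses to $1$: since each $\norm{v_j}=1$ and $\sum_j w_j = 1$, we have $\int_X \norm{z}^{2t}\,d\mu(z) = \sum_j w_j \norm{v_j}^{2t} = \sum_j w_j = 1$. With this in hand, the left-hand double integral in (\ref{WelchSidWald}) becomes $\sum_{j,k} w_j w_k |\inpro{v_j,v_k}|^{2t}$ and its right-hand side becomes $c_t(\Fd)\cdot 1^2 = c_t(\Fd)$, which is exactly the inequality (\ref{SidWelchWalineqexII}).

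For the equality conditions, I would substitute $\mu$ into (\ref{WelchSidWaldequalI}) and (\ref{WelchSidWaldequalcuberule}). Because the normalising factor $1/\int_X \norm{x}^{2t}\,d\mu(x)$ equals $1$, condition (\ref{WelchSidWaldequalI}) reduces verbatim to the generalised Bessel identity (\ref{SidWelchWalIIBes}) of part (a), and the cubature rule (\ref{WelchSidWaldequalcuberule}) reduces verbatim to the cubature rule (\ref{SidWelchWalIIcubature}) of part (b); their equivalence is inherited directly from the equivalence already proved in Theorem \ref{CSgeneralvarthm}. Since this is a pure substitution, I anticipate no genuine obstacle: the only thing to verify is that the two normalisations (unit-length vectors and unit-sum weights) together cancel the reciprocal factor, which they do. This is in fact the cleanest of the three presentations precisely because that factor disappears entirely.
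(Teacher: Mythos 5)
Your proposal is correct and coincides with the paper's own proof, which consists precisely of taking $\mu=\sum_j w_j\gd_{v_j}$ in Theorem \ref{CSgeneralvarthm}; your additional remarks simply make explicit the cancellation $\int_X\norm{z}^{2t}\,d\mu(z)=\sum_j w_j\norm{v_j}^{2t}=1$ that the paper leaves implicit. No gaps.
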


\begin{proof}
Take $\mu=\sum_j w_j\gd_{v_j}$ in Theorem \ref{CSgeneralvarthm}.
\end{proof}

Since 
$|\inpro{v,\cdot}|^{2r}\norm{\cdot}^{2t-2r}\in\Hom_{\Fd}(t,t)$
(see Example \ref{zonalinHomtt}), it follows from the cubature rule characterisation
that a  spherical $(t,t)$-design is
a spherical $(r,r)$-design for $1\le r\le t$. 
This takes a more natural form in the presentation with weights $(w_j)$
and vectors on the sphere.

\begin{proposition}
\label{lowerdegdesignprop}
Let $\FF=\RR,\CC,\HH$. Then
\begin{enumerate}[\rm(a)]
\item If $(v_j)\subset\Fd$ is a spherical $(t,t)$-design (for $\Fd)$,
then $(\norm{v_j}^{t/r-1}v_j)$ is a spherical $(r,r)$-design, 
$1\le r\le t$, i.e.,
\begin{equation}
\label{lowerdegdesignpropeqI}
\sum_{j=1}^n \sum_{k=1}^n |\inpro{v_j,v_k}|^{2r}
\norm{v_j}^{2t-2r}\norm{v_k}^{2t-2r}=c_r(\Fd) \Bigl(\sum_{\ell=1}^n \norm{v_\ell}^{2t}\Bigr)^2.
\end{equation}
\item If  $(w_j)$, $(v_j)\subset\SS(\Fd)$ is a (weighted) spherical $(t,t)$-design,
then $(w_j)$, $(v_j)$ is a spherical $(r,r)$-design, $1\le r\le t$, i.e.,
\begin{equation}
\label{lowerdegdesignpropeqII}
c_r(\Fd) := \int_\SS\int_\SS |\inpro{x,y}|^{2r}\, d\gs(x)\,d\gs(y)
= \sum_{j=1}^n\sum_{k=1}^n w_jw_k |\inpro{v_j,v_k}|^{2r} .
\end{equation}
\end{enumerate}
\end{proposition}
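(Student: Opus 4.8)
The plan is to deduce both parts from the cubature-rule characterisations already in hand (Corollaries \ref{SidWelchWalIcorollary} and \ref{SidWelchWalIIcorollary}), feeding them the single structural input, recorded in Example \ref{zonalinHomtt}, that the zonal polynomial
$g_v(x):=|\inpro{v,x}|^{2r}\norm{x}^{2t-2r}$
lies in $\Hom_{\Fd}(t,t)$ for every $v\in\Fd$ and every $0\le r\le t$, together with the mean-value formula (\ref{ctFdprop}). The point is that $g_v$ is a genuine degree-$t$ test polynomial whose restriction to $\SS$ is the degree-$r$ quantity $|\inpro{v,x}|^{2r}$, so the degree-$t$ design hypothesis can be applied to a degree-$r$ computation.

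For part (b) I would take $(w_j),(v_j)\subset\SS(\Fd)$ a weighted $(t,t)$-design, fix $k$, and apply the cubature rule (\ref{SidWelchWalIIcubature}) to $f=g_{v_k}$. Since $\norm{x}=1$ on $\SS$, the left-hand side is $\int_\SS|\inpro{v_k,x}|^{2r}\,d\gs(x)=c_r(\Fd)$ by (\ref{ctFdprop}) (with $r$ for $t$ and $\norm{v_k}=1$), while the right-hand side is $\sum_j w_j|\inpro{v_k,v_j}|^{2r}$ because $\norm{v_j}=1$. This yields $c_r(\Fd)=\sum_j w_j|\inpro{v_k,v_j}|^{2r}$ for each $k$; multiplying by $w_k$, summing over $k$, and using $\sum_k w_k=1$ gives exactly (\ref{lowerdegdesignpropeqII}). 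The identification of the double integral there with $c_r(\Fd)$ is itself (\ref{ctFdprop}) followed by $\gs(\SS)=1$.

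For part (a) I would run the analogous computation with the unnormalised rule (\ref{SidWelchWalIcubature}): evaluating it on $g_{v_k}$ gives $\norm{v_k}^{2r}c_r(\Fd)\,S=\sum_j|\inpro{v_k,v_j}|^{2r}\norm{v_j}^{2t-2r}$, where $S:=\sum_\ell\norm{v_\ell}^{2t}$. Multiplying by $\norm{v_k}^{2t-2r}$ and summing over $k$ collapses the left side to $c_r(\Fd)\,S^2$ and the right side to the left-hand side of (\ref{lowerdegdesignpropeqI}), which is the assertion. To see that this is the same as the claimed statement that $(\norm{v_j}^{t/r-1}v_j)$ is a spherical $(r,r)$-design, I would set $u_j:=\norm{v_j}^{t/r-1}v_j$, note $\norm{u_j}^{2r}=\norm{v_j}^{2t}$, and use that real scalars pass through the quaternionic inner product, $\inpro{v\alpha,w\beta}=\alpha\beta\inpro{v,w}$ for $\alpha,\beta\ge 0$, to get $|\inpro{u_j,u_k}|^{2r}=\norm{v_j}^{2t-2r}\norm{v_k}^{2t-2r}|\inpro{v_j,v_k}|^{2r}$; substituting these into the degree-$r$ design identity (\ref{vjvarcharofttdesigns}) for $(u_j)$ reproduces (\ref{lowerdegdesignpropeqI}) verbatim.

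Since the manipulations are linear bookkeeping, I expect only two points to require attention. The first is purely a matter of citation: one must invoke Example \ref{zonalinHomtt} to place $g_{v_k}$ in $\Hom_{\Fd}(t,t)$, as this is precisely what licenses using the degree-$t$ cubature rule on a degree-$r$ integrand. The second, and the only place the quaternionic case formally departs from the real and complex ones, is the rescaling identity $\inpro{v\alpha,w\beta}=\alpha\beta\inpro{v,w}$: here noncommutativity is harmless because $\alpha=\norm{v}^{t/r-1}$ and $\beta=\norm{w}^{t/r-1}$ are \emph{real}, so by conjugate-linearity in the first slot and linearity in the second they factor out as the scalar $\alpha\beta$, leaving $|\inpro{u_j,u_k}|^{2r}$ in the stated form. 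The remainder is summation.
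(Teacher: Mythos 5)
Your proof is correct and takes essentially the same route as the paper's: both rest on Example \ref{zonalinHomtt} placing $|\inpro{v,\cdot}|^{2r}\norm{\cdot}^{2t-2r}$ in $\Hom_{\Fd}(t,t)$, fed into the degree-$t$ cubature rules of Corollaries \ref{SidWelchWalIcorollary} and \ref{SidWelchWalIIcorollary}. The only difference is packaging: the paper applies the rule to a general $f\in\Hom_{\Fd}(r,r)$ multiplied by $\norm{\cdot}^{2t-2r}$ and then cites the corollary's equivalence, whereas you test it on the spanning kernels $g_{v_k}$ and sum directly to the variational equalities, also spelling out the rescaling bookkeeping for part (a) that the paper leaves as ``follows similarly.''
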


\begin{proof}
Let $f\in\Hom_{\Fd}(r,r)$, so that $\norm{\cdot}^{2t-2r}f\in\Hom_{\Fd}(t,t)$, and
(\ref{SidWelchWalIIcubature}) gives
$$ \int_\SS f \,d\gs
= \int_\SS \norm{\cdot}^{2t-2r}f \,d\gs
= \sum_{j=1}^n w_j \norm{v_j}^{2t-2r}f(v_j)
= \sum_{j=1}^n w_j f(v_j), $$
which, by Corollary \ref{SidWelchWalIIcorollary}, gives (b).
Part (a) follows similarly from Corollary \ref{SidWelchWalIcorollary}
(also see \cite{W18} Proposition 6.2).
\end{proof}

\section{Quaternionic spherical $(t,t)$-designs}

In view of Theorem \ref{Quaterttdesigncharthm},
examples of quaternionic spherical $(t,t)$-designs
are given by the known projective $t$-designs. In particular, see 
the listing in \cite{H82}.

\begin{example} There are nine quaternionic spherical $(t,t)$-designs listed in \cite{H82}. 
All, except Examples $27$ and $30$, have rational angles $\ga=|\inpro{v_j,v_k}|^2$.
The Example 27 is a $315$ vector $(5,5)$-design for $\HH^3$,  
for which (\ref{Hoggarcdn})
holds as (see \cite{H84}, Table 3)
$$ \hbox{$1+10(0)^5+32({3-\sqrt{5}\over8})^5+160({1\over4})^5+80({1\over2})^5+32({3+\sqrt{5}\over8})^5
= {15\over2} = n\cdot c_5(\HH^3)= 315 {1\over 42}$}. $$
\end{example}


For the octonions (Cayley numbers) $\Od$, one can formally define the Euclidean inner product
as for $\RR,\CC,\HH$. It satisfies
$$ \overline{\inpro{v,w}}=\inpro{w,v}, \qquad
|\inpro{v,w}|^2=|\inpro{w,v}|^2, \qquad
\inpro{v,v}>0, \quad v\ne0, $$
but is not linear in the second variable (only additive, in both).
Therefore the variational inequality (\ref{SidWelchWalineqexI})
and {\em octonionic} spherical $(t,t)$-designs
can be defined formally.
A notion of (projective) unitary equivalence of such designs is not obvious.
It has not yet been established whether the octonionic version of the 
Welch-Sidelnikov inequality holds.

\begin{example} (MUBs) Consider the $n=2m+2$ unit vectors $(v_j)$ in $\Fd$ given by
$$ \{\pmat{1\cr0},\pmat{0\cr1}\} \cup \{ {1\over\sqrt{2}}\pmat{1\cr a}, {1\over\sqrt{2}}\pmat{1\cr -a} \}_{a\in\{1,i,j,k\}}, $$
The left-hand side of 
(\ref{vjvarcharofttdesigns}) is
$$ (2m+2)\cdot 1^t + 2m(2m+2)\cdot\Bigl({1\over2}\Bigr)^t   + (2m+2)\cdot 0^t
= 2m+2+{4\over 2^t}m(m+1), $$
and the right-hand side is
$$ c_t(\FF^2) (2m+2)^2
= { m(m+2)\cdots(m+2t-2)\over md(md+2)\cdots (md+2t-2)} (2m+2)^2. $$
These are equal for $t=1,2,3$ (and all values for $m$),
giving spherical $(3,3)$-designs.
They are Examples 1,2,3 of \cite{H82}, with Example 4 giving the octonionic version.
The ten vectors in the quaternionic case can be interpreted as a set of five mutually 
unbiased bases (or MUBs) in $\HH^2$. These meet the bound $2d+1$ on the number of 
MUBs in $\Hd$ (see \cite{CD08}). There is a general bound of ${m\over2}d+1$ on the 
number of MUBs in $\Fd$, which is obtained by this example.
\end{example}

A sequence of unit vectors $(v_j)$ in $\Fd$ (or the lines they give) is {\bf equiangular} if 
$$  |\inpro{v_j,v_k}|^2=C, \qquad j\ne k, $$
for some constant $C$. The case $C=0$ gives orthonormal vectors.

\begin{example} (SICs) It can be shown \cite{W20},
that the number of equiangular lines in $\Fd$ is 
less than or equal to $d+{m\over2}(d^2-d)$, and such a (maximal) set of
$n=d+{m\over2}(d^2-d)$ equiangular 
lines is a tight frame, with equiangularity constant $C={m\over md+2}$.
There is considerable interest in such maximal sets of equiangular lines, 
especially in the complex case, where they are known as SICs (see \cite{ACFW18}).
It follows that such a configuration is a spherical $(2,2)$-design by 
verifying
(\ref{vjvarcharofttdesigns}) via the calculation
\begin{align*} 
n +&(n^2-n)C^2 = n\bigl( 1+(n-1)C^2\bigr)= 
n \Bigl(1+\Bigl(d+{m\over2}(d^2-d)-1\Bigr)\Bigl({m\over md+2}\Bigr)^2\Bigr) \cr
&= n {(md-m+2)(m+2)\over 2(md+2)}
= n \Bigl(d+{m\over2}(d^2-d)\Bigr) {m(m+2)\over md(md+2)}
=n^2 c_2(\Fd). 
\end{align*}
There are six equiangular lines in $\HH^2$ (see \cite{K08},\cite{W20}), and 
Example 15 of \cite{H82} gives a construction of $n=2d$ equiangular lines in $\Hd$.
\end{example}

The variational characterisation (Corollary \ref{SidWelchWalIcorollary}) of spherical $(t,t)$-designs 
allows for a numerical search for them (see \cite{HW20} for the real and complex cases), 
by minimising the left-hand side of (\ref{SidWelchWalineqexI}). 
Naive calculations readily identified 
many of the known quaternionic spherical $(t,t)$-designs above 
(which have a putatively optimal number of vectors). 
We also noticed
some {\it near} designs, with rational angles (to machine precision).

\begin{example}
\label{numericalsearchI}
A numerical search for $(2,2)$-designs with a fixed number of vectors/lines in $\HH^2$, 
by minimising the left-hand side of (\ref{SidWelchWalineqexI}), gave the six equiangular lines.
A search with five vectors gave five of these six lines, 
with the variational inequality (\ref{SidWelchWalineqexI}) being
$$ 5(1)^2+20({3\over 8})^2={125\over16} =7.812500 > 7.5 = 5^2 c_2(\HH^5), $$
and a search with seven vectors
(of unit length), gave a near $(2,2)$-design, with angles ${1\over4},{1\over3},{1\over2}$ 
(to high precision), with the variational inequality being
$$ 7(1)^2+24({1\over2})^2+12({1\over3})^2+6({1\over4})^2
={353\over24}= 14.708333\cdots>14.7=c_2(\HH^2)(7)^2.  $$
\end{example}

\begin{example}
\label{numericalsearchII}
A numerical search for $(4,4)$-designs for $\HH^2$ gave various $(3,3)$-designs,
including one of $12$ vectors and one of $14$ vectors, with the corresponding
variational inequalities
$$ 12(1)^4+12(0)^4+60({2\over5})^4+60({3\over 5})^4={2664\over125}= 21.31200000
> 20.57142857\cdots = {12^2\over7}, $$
$$ 16(1)^4+80(1/5)^4+160({3\over5})^4= 36.86400000 > 36.57142857\cdots = {16^2\over7}. $$
\end{example}

Currently there is no method for determining whether or not quaternionic spherical 
$(t,t)$-designs are unitarily equivalent (as is there is in the real complex 
cases \cite{CW16}), and so it is not yet possible to see whether these numerical designs 
(and near designs) are unique up to projective unitary equivalence.

\section{Projective spherical $t$-designs on Delsarte spaces}

We now seek to make the connection between spherical $(t,t)$-designs 
(as we have defined them) and the projective spherical $t$-designs.
For this purpose, it is convenient to work with weights $(w_j)$ and 
vectors $(v_j)$ in $\SS$.

The condition of equality in (\ref{SidWelchWalineqexII}) 
used to define a spherical $(t,t)$-design can be written
\begin{equation}
c_t(\Fd) := \int_\SS \int_\SS |\inpro{x,y}|^{2t}\, d\gs(x)\,d\gs(t)
= \sum_{j=1}^n\sum_{k=1}^n w_jw_k |\inpro{v_j,v_k}|^{2t}, 
\end{equation}
or, equivalently,
\begin{equation}
\label{univariateint}
\int_\SS \int_\SS g(|\inpro{x,y}|^{2})\, d\gs(x)\,d\gs(y)
= \sum_{j=1}^n\sum_{k=1}^n w_jw_k g(|\inpro{v_j,v_k}|^{2}), 
\end{equation}
for $g=(\cdot)^t$, the univariate monomial of degree $t$.
By (\ref{lowerdegdesignpropeqII}) of Proposition \ref{lowerdegdesignprop},
(\ref{univariateint}) also holds for the univariate monomials $g=(\cdot)^r$, $1\le r\le t-1$,
and it holds trivially for the constant monomial $g=(\cdot)^0=1$. Thus

\begin{lemma} 
\label{Hoggarconnectionlemma}
Let $\mu_m$ be the Borel (probability) measure defined on $[0,1]\subset\RR$ by
\begin{equation}
\label{mumdefn}
\int_0^1 g(s)\, d\mu_m(s):= \int_\SS\int_\SS g(|\inpro{x,y}|^2)\,d\gs(x)\,d\gs(y),
\end{equation}
so that (\ref{univariateint}) can be written as
\begin{equation}
\label{univariateintII}
\int_0^1 g\, d\mu_m
= \sum_{j=1}^n\sum_{k=1}^n w_jw_k g(|\inpro{v_j,v_k}|^{2}),
\end{equation}
and let $Q_0^{(m)},Q_1^{(m)},\ldots$ the orthogonal polynomials for the measure $\mu_m$.
Then the condition for $(w_j)$, $(v_j)\subset\SS(\FF)$ to be a 
spherical $(t,t)$-design for $\Fd$ is equivalent to the following
\begin{enumerate}[\rm(a)]
\item The equation (\ref{univariateintII}) holds for the monomial $g=(\cdot)^t$.
\item The equation (\ref{univariateintII}) holds for all $g\in\Pol_t(\RR)$.
\item The equation (\ref{univariateintII}) holds for  $g=Q_1^{(m)},\ldots,Q_t^{(m)}$, i.e.,
\begin{equation}
\label{Hoggarform}
\sum_{j=1}^n\sum_{k=1}^n w_jw_k Q_\ell^{(m)}(|\inpro{v_j,v_k}|^{2})=0,  \qquad \ell=1,\ldots,t.
\end{equation}
\end{enumerate}
\end{lemma}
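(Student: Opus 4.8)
The plan is to show that (a), (b) and (c) are all equivalent, observing first that (a) is nothing more than the defining equality for a spherical $(t,t)$-design: by the definition (\ref{mumdefn}) of $\mu_m$, the left-hand side of (\ref{univariateintII}) with $g=(\cdot)^t$ is $\int_0^1 s^t\,d\mu_m(s)=c_t(\Fd)$, so (\ref{univariateintII}) for $g=(\cdot)^t$ is precisely the equality case of (\ref{SidWelchWalineqexII}), i.e. the defining condition of a spherical $(t,t)$-design from Corollary \ref{SidWelchWalIIcorollary}. The one structural fact I would use repeatedly is that both sides of (\ref{univariateintII}) are linear in $g$; hence it suffices to verify (\ref{univariateintII}) on any spanning set of $\Pol_t(\RR)$.

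For (a) $\Longrightarrow$ (b): if the configuration is a spherical $(t,t)$-design, then Proposition \ref{lowerdegdesignprop}(b) shows it is a spherical $(r,r)$-design for every $1\le r\le t$. By (\ref{lowerdegdesignpropeqII}) together with (\ref{mumdefn}), this is exactly the statement that (\ref{univariateintII}) holds for each monomial $g=(\cdot)^r$, $1\le r\le t$. The constant monomial $g=1$ is handled separately: since $\mu_m$ is a probability measure and $\sum_{j}\sum_k w_jw_k=(\sum_j w_j)^2=1$, both sides of (\ref{univariateintII}) equal $1$. As $\{1,(\cdot),\ldots,(\cdot)^t\}$ is a basis of $\Pol_t(\RR)$, linearity yields (b). The converse (b) $\Longrightarrow$ (a) is immediate because $(\cdot)^t\in\Pol_t(\RR)$.

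For (b) $\Longleftrightarrow$ (c) I would switch to the orthogonal-polynomial basis $Q_0^{(m)},\ldots,Q_t^{(m)}$ of $\Pol_t(\RR)$. For each $\ell\ge1$ the polynomial $Q_\ell^{(m)}$ is $\mu_m$-orthogonal to the constant $Q_0^{(m)}$, so $\int_0^1 Q_\ell^{(m)}\,d\mu_m=0$ and (\ref{univariateintII}) for $g=Q_\ell^{(m)}$ collapses to exactly (\ref{Hoggarform}); the case $\ell=0$ holds automatically by the same probability-measure normalisation used above. Linearity then shows that (\ref{Hoggarform}) for $\ell=1,\ldots,t$ is equivalent to (\ref{univariateintII}) for all $g\in\Pol_t(\RR)$, i.e. (b) $\Longleftrightarrow$ (c). The substantive input is already in hand — namely that a $(t,t)$-design is automatically an $(r,r)$-design for all lower $r$, supplied by Proposition \ref{lowerdegdesignprop} — so no real obstacle remains; the only points needing care are the bookkeeping of the degree-zero (constant) term, which must be checked to agree on both sides via the normalisation $\sum_j w_j=1$, and confirming that the right-hand side of (\ref{univariateintII}) is genuinely linear in $g$ so that testing against a basis suffices.
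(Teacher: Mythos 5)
Your proposal is correct and follows essentially the same route as the paper's proof: the paper likewise takes (a) as the defining equality for a spherical $(t,t)$-design, invokes Proposition \ref{lowerdegdesignprop} to get (\ref{univariateintII}) for the monomials $(\cdot)^r$, $1\le r\le t$, handles the constant term via the normalisation, and passes to (c) by noting $\int_0^1 Q_\ell^{(m)}\,d\mu_m=\int_0^1 Q_\ell^{(m)}Q_0^{(m)}\,d\mu_m=0$ for $\ell\ge1$. Your explicit attention to linearity in $g$ and to the degree-zero bookkeeping is slightly more detailed than the paper's brief argument, but there is no substantive difference.
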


\begin{proof} We have already observed the conditions (a) and (b). Since $Q_0^{(m)},\ldots,Q_t^{(m)}$ is
a basis for $\Pol_t(\RR)$, and (\ref{univariateintII}) holds trivially for the constant
polynomial $Q_0^{(m)}=1$, we obtain the condition that
(\ref{univariateintII}) holds for $Q_1^{(m)},\ldots,Q_t^{(m)}$. The orthogonality 
condition gives
$$ \int_0^1 Q_\ell^{(m)} \,d\mu_m 
=  \int_0^1 Q_\ell^{(m)}Q_0^{(m)} \,d\mu_m
= 0, \qquad \ell=1,2,\ldots, $$
and therefore we obtain (c).
\end{proof}

The (induced) measure of (\ref{mumdefn}) is absolutely continuous with respect
to Lebesgue measure, and is given by 
(see \cite{H82}, Theorem 2.11) $d\mu_m(z)=W(z)\,dz$, where
\begin{equation}
\label{inducedmeasureprop}
W(z) := {\gG({md\over2})\over\gG({m\over2})\gG({m\over2}(d-1))}
z^{{m\over2}-1} (1-z)^{{m\over2}(d-1)-1}, \qquad m:=\dim_\RR(\FF).
\end{equation}
This can be checked, using the density of polynomials in $L_1(\mu_m)$, by
the calculation
\begin{align*}
\int_0^1 z^r \, W(z)\,dz
&= {\gG({md\over2})\over\gG({m\over2})\gG({m\over2}(d-1))}
\int_0^1 z^{{m\over2}+r-1} (1-z)^{{m\over2}(d-1)-1}\, dz \cr
&= {\gG({md\over2})\over\gG({m\over2})\gG({m\over2}(d-1))} 
{ \gG({m\over2}+r)\gG({m\over2}(d-1))\over \gG({md\over2}+r)}
= { ({m\over 2})_r\over({md\over 2})_r}\cr
&= c_r(\Fd) = \int_\SS\int_\SS (|\inpro{x,y}|^2)^r\, d\gs(x)\,d\gs(y).
\end{align*}
It is evident from (\ref{inducedmeasureprop}) that the orthogonal polynomials $Q_k^{(m)}$ of 
Lemma \ref{Hoggarconnectionlemma} are Jacobi polynomials (on $[0,1]$). 
Hence, we have
\begin{align*}
Q_k^{(m)}(x) &= P_k^{({m\over2}-1,{m\over2}(d-1)-1)} (1-2x)
= {({m\over2})_k\over k!} {}_2F_1(\hbox{$-k,{md\over2}-1+k,{m\over2};x$}) \cr
& = {({m\over2})_k\over k!} \sum_{j=0}^k (-1)^j {k\choose j} {({md\over2}-1+k)_j
\over ({m\over2})_j} x^j.
\end{align*}
The condition (\ref{Hoggarform}) does not depend on the particular
normalisation of the $Q_k^{(m)}$. The norm can be calculated from the orthogonality
relations for the Jacobi polynomials
$$ \int_{-1}^1 (1-z)^\ga(1+z)^\gb P_j^{(\ga,\gb)}(z) P_k^{(\ga,\gb)}(z)\,dz
={2^{\ga+\gb+1}\over 2k+\ga+\gb+1}{\gG(k+\ga+1)\gG(k+\gb+1)\over\gG(k+\ga+\gb+1)k!}\gd_{jk}. $$
The substitution
$z=1-2x$, so that $1-z=2x$, $1+z=2(1-x)$ and $dz=-2dx$, gives
$$ \int_0^1 P_j^{(\ga,\gb)}(1-2x) P_k^{(\ga,\gb)}(1-2x)\, {\gG(\ga+\gb+2) x^\ga(1-x)^\gb \over\gG(\ga+1)\gG(\gb+1)}\, dx 
= {1\over 2k+\ga+\gb+1} {(\ga+1)_{k}(\gb+1)_{k}\over(\ga+\gb+2)_{k-1} k! }
\gd_{jk} , $$
where $(x)_{-1}:=1/(x-1)$. Taking
$\ga={m\over2}-1$, $\gb={m\over2}(d-1)-1$, then gives
$$ \int_0^1 Q_j^{(m)} Q_k^{(m)} \,d\mu_m
={1\over 2k+{md\over2}-1} { ({m\over2})_{k}({m\over2}(d-1))_{k}\over({md\over2})_{k-1}}
{1\over k!} \gd_{jk}. $$

The condition (c) of Lemma \ref{Hoggarconnectionlemma}
is essentially Hoggar's definition of a $t$-design in the projective space $\FF P^{d-1}$
(a {\bf projective $t$-design}) \cite{H82},\cite{H84},\cite{H90} which is an example of a more 
general theory of $t$-designs on Delsarte spaces (which we discuss later).
There only the case with constant weights $w_j=1$ is considered, but the ``weighted'' version
of projective $t$-designs extends in the obvious fashion, see \cite{L98}.
This connection is generally understood for $\FF=\RR,\CC$ (see \cite{RS07}, \cite{W18} Theorem 6.7),
and is a new result for $\FF=\HH$.


\begin{theorem}
\label{Quaterttdesigncharthm}
The spherical $(t,t)$-designs for $\Fd$
are precisely the (projective) $t$-designs on the Delsarte spaces $\FF P^{d-1}$,
for $\FF=\RR,\CC,\HH$.
\end{theorem}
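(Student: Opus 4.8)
The plan is to show that the defining condition for a spherical $(t,t)$-design coincides exactly with Hoggar's defining condition for a projective $t$-design on the Delsarte space $\FF P^{d-1}$. The bulk of the work has already been assembled in Lemma \ref{Hoggarconnectionlemma} and the subsequent discussion, so the proof is essentially a matter of making the final identification precise. First I would fix the normalisation of the design to be the weighted form with unit vectors $(v_j)\subset\SS(\Fd)$ and weights $(w_j)$ summing to $1$ (which we may do by the equivalence discussion following Corollary \ref{SidWelchWalIIcorollary} and Proposition \ref{lowerdegdesignprop}); this is the natural setting for comparison with Hoggar's definition, in which designs live on the projective space and the data are lines together with weights.

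Next I would invoke Lemma \ref{Hoggarconnectionlemma}, whose condition (c) states that $(w_j)$, $(v_j)$ is a spherical $(t,t)$-design for $\Fd$ if and only if
$$ \sum_{j=1}^n\sum_{k=1}^n w_jw_k Q_\ell^{(m)}(|\inpro{v_j,v_k}|^{2})=0, \qquad \ell=1,\ldots,t, $$
where $Q_\ell^{(m)}$ are the orthogonal polynomials for the induced measure $\mu_m$ on $[0,1]$. The crux is then to recognise that the $Q_\ell^{(m)}$ are, up to normalisation, precisely the Jacobi polynomials $P_\ell^{({m\over2}-1,{m\over2}(d-1)-1)}(1-2x)$ that Hoggar uses, and that (\ref{Hoggarform}) is verbatim his definition of a projective $t$-design. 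This identification of the measure $\mu_m$ with density (\ref{inducedmeasureprop}) and the resulting Jacobi polynomials has already been carried out in the paragraphs preceding the theorem; I would simply cite \cite{H82}, \cite{H84} to confirm that the zonal orthogonal polynomials on $\FF P^{d-1}$ for $\FF=\RR,\CC,\HH$ agree with these $Q_\ell^{(m)}$. The key observation is that the quantity $|\inpro{v_j,v_k}|^2$ depends only on the lines $[v_j]$, $[v_k]$ (it is invariant under replacing $v_j$ by $v_j\alpha$ with $|\alpha|=1$), so the condition is genuinely a condition on points of projective space.

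The main obstacle, and the one genuinely new contribution for $\FF=\HH$, is verifying that the measure $\mu_m$ governing the inner-product distribution on the \emph{quaternionic} sphere really does yield the Jacobi weight (\ref{inducedmeasureprop}) with the correct parameters, and that these Jacobi polynomials coincide with the zonal spherical functions that define Hoggar's quaternionic projective designs. For $\FF=\RR,\CC$ this correspondence is classical (\cite{RS07}, \cite{W18}); the quaternionic case requires that the Delsarte-space structure of $\HH P^{d-1}$ produces the same Gegenbauer/Jacobi addition theorem, which is where the earlier calculation of $W(z)$ via (\ref{mumdefn}) and the Beta-integral identity does the heavy lifting. Once this is in hand, the equivalence is immediate in both directions: a spherical $(t,t)$-design satisfies (\ref{Hoggarform}) by Lemma \ref{Hoggarconnectionlemma}(c), which is Hoggar's condition, and conversely any projective $t$-design satisfies (\ref{Hoggarform}) and hence is a spherical $(t,t)$-design. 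I would close by noting that the normalisation of the $Q_\ell^{(m)}$ is irrelevant, since the defining equations are homogeneous (each right-hand side is $0$), so no constant needs to be tracked.
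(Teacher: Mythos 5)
Your proposal is correct and takes essentially the same route as the paper: reduce to the weighted unit-vector normalisation, invoke Lemma \ref{Hoggarconnectionlemma}(c), identify the measure $\mu_m$ with the Jacobi weight (\ref{inducedmeasureprop}) as done in the preceding discussion, and match the $Q_\ell^{(m)}$ with Hoggar's polynomials, observing that normalisation is immaterial because the defining equations (\ref{Hoggarform}) are homogeneous. The only difference is one of detail: where you propose simply to cite \cite{H82}, \cite{H84} for the agreement of Hoggar's zonal polynomials with the $Q_\ell^{(m)}$, the paper carries out the explicit Pochhammer-symbol computation rewriting both in the monomial basis to exhibit Hoggar's $Q_k$ as constant multiples of $Q_k^{(m)}$.
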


\begin{proof} 
The Neumaier construction of $t$-designs in Delsarte spaces \cite{N81}, 
which Hoggar \cite{H82},\cite{H84} used to construct projective $t$-designs,
involves the distance $d([x],[y])=\sqrt{1-|\inpro{x,y}|^2}$ between lines
given by unit vectors $x,y\in\Fd$, which are reformulated in terms of the ``angle'' 
$|\inpro{x,y}|^2=\cos^2\gth_{xy}$. It is enough to observe that the condition (1) in \cite{H84} is condition (c) of
Lemma \ref{Hoggarconnectionlemma}, where the polynomials $Q_k$ defined in (4) are multiples of
our $Q_k^{(m)}$, since
\begin{align*}
Q_k^{(m)}(x) 
&= { ({md\over2}-1+k)_k \over k! } \sum_{j=0}^k (-1)^j { k\choose j} 
{ ({m\over2}+j)_{k-j} \over ({md\over2}-1+k+j)_{k-j} } x^j,
\end{align*}
and, with ${}_j(x):=x(x-1)\cdots(x-(j-1))=(x-j+1)_j$,
\begin{align*}
Q_k(x) &:= { ({md\over2})_{2k} \over ({m\over2})_k k!} \sum_{j=0}^k (-1)^j {k\choose j}
{ {}_j(k+{m\over2}-1) \over {}_j(2k+{md\over2}-2)} x^{k-j} \cr
&= { ({md\over2})_{2k} \over ({m\over2})_k k!} \sum_{j=0}^k (-1)^{k-j} {k\choose j}
{ {}_{k-j}(k+{m\over2}-1) \over {}_{k-j}(2k+{md\over2}-2)} x^{j} \cr
&= { ({md\over2})_{2k} \over ({m\over2})_k k!} 
(-1)^{k} \sum_{j=0}^k (-1)^{j} {k\choose j}
{ ({m\over2}+j)_{k-j} \over ({md\over2}+k+j-1)_{k-j}} x^{j}.
\end{align*}
\end{proof}

Hoggar \cite{H84} considered {\bf regular schemes} $\cB$, i.e., finite sets of projective points
(unit vectors in $\Fd$) with angles $A=\{\ga_1,\ldots,\ga_s\}\subset[0,1]$,
for which the number $d_{\ga_j}$ of points making an angle $\ga_j$ with $x\in\cB$ is
independent of $x$, e.g., those given by an orbit. 

\begin{corollary}
\label{regularschemecor}
Let $\cB$ be a regular scheme of $n$ points in $\Fd$. 
Then $\cB$ is a projective $t$-design if and only if 
\begin{equation}
\label{Hoggarcdn}
1+\ga_1^r d_{\ga_1}+\cdots+\ga_s^r d_{\ga_s} = n {({m\over2})_r\over({md\over2})_r},
\end{equation}
for $r=t$. 
\end{corollary}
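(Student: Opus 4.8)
The plan is to reduce the defining condition of a projective $t$-design (condition (c) of Lemma \ref{Hoggarconnectionlemma}) to the simple moment equation (\ref{Hoggarcdn}) by exploiting the extra structure of a regular scheme. The key observation is that for a regular scheme with angles $A=\{\ga_1,\ldots,\ga_s\}$ and multiplicities $d_{\ga_j}$, together with constant weights $w_j=1/n$, the double sum over $j,k$ collapses. Indeed, for any univariate polynomial $g$,
\begin{align*}
\sum_{j=1}^n\sum_{k=1}^n w_jw_k\, g(|\inpro{v_j,v_k}|^2)
&= {1\over n^2}\sum_{j=1}^n\Bigl( g(1)+\sum_{i=1}^s d_{\ga_i}\,g(\ga_i)\Bigr) \cr
&= {1\over n}\Bigl( g(1)+\sum_{i=1}^s d_{\ga_i}\,g(\ga_i)\Bigr),
\end{align*}
where I have used that $|\inpro{v_k,v_k}|^2=1$ contributes the term $g(1)$ (from $j=k$) and that, by the regular-scheme hypothesis, each fixed $v_k$ makes angle $\ga_i$ with exactly $d_{\ga_i}$ of the remaining points, independent of $k$. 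This is precisely the left-hand side of (\ref{Hoggarcdn}) divided by $n$, once we specialise to $g=(\cdot)^r$.

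First I would invoke Theorem \ref{Quaterttdesigncharthm} to identify projective $t$-designs with spherical $(t,t)$-designs, and then apply condition (a) of Lemma \ref{Hoggarconnectionlemma} (equivalently (\ref{univariateintII}) for the monomial $g=(\cdot)^t$) as the working characterisation. Plugging $g(s)=s^r$ into the collapsed sum above yields
$$ {1\over n}\Bigl( 1 + \ga_1^r d_{\ga_1}+\cdots+\ga_s^r d_{\ga_s}\Bigr)
= \int_0^1 s^r\,d\mu_m(s) = c_r(\Fd) = {({m\over2})_r\over({md\over2})_r}, $$
where the middle equalities are exactly the moment computation carried out after (\ref{inducedmeasureprop}). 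Multiplying through by $n$ gives (\ref{Hoggarcdn}) with $r=t$, establishing the forward direction. For the converse, I note that (\ref{Hoggarcdn}) at $r=t$ is literally condition (a) of Lemma \ref{Hoggarconnectionlemma} rewritten for the regular scheme, so the equivalence of (a) and the design property (already proved in that Lemma) delivers the reverse implication immediately.

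The only genuinely delicate point is bookkeeping rather than mathematics: ensuring the diagonal term $j=k$ is accounted for exactly once (producing the leading $1$ in (\ref{Hoggarcdn}), since $g(1)=1^r=1$), and confirming that constant weights $w_j=1/n$ match Hoggar's unweighted convention so that the normalising factor is $1/n^2$ before summation. I expect no real obstacle here; the regular-scheme hypothesis does all the work by making the inner sum $\sum_k g(|\inpro{v_j,v_k}|^2)$ independent of $j$, which is what turns a double sum into the single moment condition. The proof is therefore essentially a specialisation of Lemma \ref{Hoggarconnectionlemma}(a) to the distance-regular setting, and I would present it as such, keeping the computation to the two displayed lines above.
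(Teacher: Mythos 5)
Your proposal is correct and takes essentially the same approach as the paper: both identify projective $t$-designs with spherical $(t,t)$-designs via Theorem \ref{Quaterttdesigncharthm}, collapse the double sum to a single row using the regular-scheme hypothesis, and equate the result with $n^2 c_t(\Fd)$ where $c_t(\Fd)=(\tfrac{m}{2})_t/(\tfrac{md}{2})_t$. The only cosmetic difference is that you route through the weighted condition of Lemma \ref{Hoggarconnectionlemma}(a) with $w_j=1/n$, while the paper applies the unweighted variational condition (\ref{vjvarcharofttdesigns}) directly; these are equivalent normalisations.
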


\begin{proof} Since $\cB=(v_j)$ is a regular scheme, 
the condition
(\ref{vjvarcharofttdesigns})  
for
being a spherical $(t,t)$-design (and hence a $t$-design) reduces to
$$ \sum_j\sum_k |\inpro{v_j,v_k}|^{2t}
= n `\sum_k |\inpro{v_j,v_1}|^{2t}
= n \bigl( 1+\ga_1^r d_{\ga_1}+\cdots+\ga_s^r d_{\ga_s} \bigr)
= c_t(\Fd) (n)^2, $$
which (after division by $n$) is (\ref{Hoggarcdn}).
\end{proof}

This illuminates and refines the Theorem 2.4 of \cite{H84}, 
which gives the condition for a regular scheme $\cB$ to be a projective $t$-design 
is that (\ref{Hoggarcdn}) holds for $r=1,\ldots,t$.

We now consider the Delsarte space construction in more detail. 
If $(X,d)$ is a metric space with finite diameter, and $\go$ a finite measure on $X$, then it 
is a {\bf Delsarte space} (with respect to $\go$) if there exist polynomials $f_{jk}$ of
degree $\le\min\{j,k\}$, for which
\begin{equation}
\label{Delsartedefcdn}
\int_X d(a,x)^{2j} d(b,x)^{2k}\, d\go(x) = f_{jk}(d(a,b)^2), \qquad\forall j,k=0,1,2,\ldots. 
\end{equation}

The metric on lines $[x]=\{\gl x:\gl\in\FF,|\gl|=1$, $x\in\Fd$, $\norm{x}=1$ in $X=\FF P^{d-1}$ is
\begin{equation}
\label{Hoggarlinemetric}
d([x],[y]) = \sqrt{1-|\inpro{x,y}|^2}, 
\end{equation}
and the measure on $X$ is given by
$$ \int_X f([x])\,d\go([x]) = \int_\SS \tilde f(x)\,\gs(x), \qquad \tilde f(x):=f([x]). $$
The condition (\ref{Delsartedefcdn}) to be a Delsarte space is that
$$ \int_X (1-|\inpro{a,x}|^2)^j(1-|\inpro{b,x}|^2)^k\, d\go([x])
=\int_\SS (1-|\inpro{a,x}|^2)^j(1-|\inpro{b,x}|^2)^k\, d\gs(x)
 = f_{jk}(1-|\inpro{a,b}|^2), $$
which is equivalent to 
\begin{equation}
\label{altDesartecdn}
\int_\SS |\inpro{a,x}|^{2j}|\inpro{b,x}|^{2k}\, d\gs(x)
 = p_{jk}(|\inpro{a,b}|^2), \qquad\forall a,b\in\SS, 
\end{equation}
for some polynomials $p_{jk}$ with degree $\le\min\{j,k\}$.
This has been proved by \cite{N81} ($\FF=\RR,\CC$) and \cite{G67} ($\FF=\HH$), so
that $\FF P^{d-1}$ is indeed a Delsarte space. 

We can give a constructive proof of (\ref{altDesartecdn}), as follows. 
As motivation, we note that Lemma \ref{ctcalclemma} gives the special case $k=0$ (a constant polynomial)
$$ \int_\SS |\inpro{a,x}|^{2j}|\inpro{b,x}|^{0}\, d\gs(x)
=\int_\SS |\inpro{a,x}|^{2j}\, d\gs(x)
= c_j(\Fd). $$
Assume, without loss of generality, that $k\le j$. By Gram-Schmidt, for $a,b\in\SS$, we have
$$ b=(b-a\inpro{a,b})+a\inpro{a,b}, \qquad (b-a\inpro{a,b})\perp a, \qquad \norm{b-a\inpro{a,b}}=\sqrt{1-|\inpro{a,b}|^2}. $$
Thus, we may choose a unitary $U$ with 
$$ U(a\inpro{a,b})=|\inpro{a,b}|e_1, \qquad U(b-a\inpro{a,b})=\sqrt{1-|\inpro{a,b}|^2}e_2, $$
so that, by the unitary invariance of surface area measure, we have
\begin{align*} \int_\SS & |\inpro{a,x} |^{2j}|\inpro{b,x}|^{2k}\, d\gs(x)
= \int_\SS |\inpro{Ua,x}|^{2j}|\inpro{Ub,x}|^{2k}\, d\gs(x) \cr
&= \int_\SS |\inpro{e_1,x}|^{2j}|\inpro{e_1|\inpro{a,b}|+\sqrt{1-|\inpro{a,b}|^2}e_2,x}|^{2k}\, d\gs(x) \cr
&= \int_\SS |x_1|^{2j}\Bigl||\inpro{a,b}|x_1+\sqrt{1-|\inpro{a,b}|^2}x_2\Bigr|^{2k}\, d\gs(x) \cr
&= \int_\SS |x_1|^{2j}| \bigl(|\inpro{a,b}|^2|x_1|^2+(1-|\inpro{a,b}|^2)|x_2|^2+2|\inpro{a,b}|\sqrt{1-|\inpro{a,b}|^2}
\Re(x_1\overline{x_2})\bigr)^{k}\, d\gs(x). \cr
\end{align*}
It is easily verified that the integral of an odd power of $\Re(x_1\overline{x_2})$ is zero
(in each of the cases $\FF=\RR,\CC,\HH$),
and so the above integral gives a polynomial of degree $k$ in $|\inpro{a,b}|^2$.

As the calculation above suggests, for the Delsarte space $X=\FF P^{d-1}$, 
it is more convenient to work with $|\inpro{x,y}|^2$, rather than the metric $d$ of
(\ref{Hoggarlinemetric}). This view point is taken in  
the unified development of Levenshtein \cite{L98}, who gives bounds for a large
class of ``codes'' $C\subset X$, with weights $m$. In addition to a metric space $(X,d)$ with a finite measure $\go$ and
weights $m$, there
is a {\bf substitution} $\gs_s$, i.e., continuous strictly monotone function $[0,\diam(X)]\to\RR$.
In this setup, a finite set $C$ with weights $m$ (which add to $|C|$) is a {\bf weighted 
$\tau$-design} (in $X$ with respect to the substitution $\gs_s(d)$) if 
\begin{equation}
\label{taudesigndef}
\int_X\int_X g(\gs_s(d(x,y)))\, d\go(x)\,d\go(y) = 
{1\over |C|^2} \sum_{x,y\in C} g(\gs_s(d(x,y)))\, m(x)m(y), 
\end{equation}
holds for all univariate polynomials $g:\RR\to\RR$ of degree $\le\tau$. 
Since this definition depends only on $\gs_s$ up to a linear change of variables, we 
may choose $\gs_s$ to have the {\bf standard} form (to be a {\bf standard substitution})
$$ \gs_s(\diam(X))=-1\le\gs_s(d)\le 1=\gs_s(0)). $$
For $\FF P^{d-1}$, \cite{L98} takes the following variant of the metric (\ref{Hoggarlinemetric})
and the standard substitution
\begin{equation}
\label{Levenshteinelinemetric}
d([x],[y]) = \sqrt{1-|\inpro{x,y}|},  \qquad \gs_s(d) = 2(1-d^2)^2-1,
\end{equation}
where
$$
\gs_s(d([x],[y])) = 2|\inpro{x,y}|^2-1=\cos(2\gth_{xy}), \quad |\inpro{x,y}|=\cos(\gth_{xy}). $$
The general form of condition (c) of Lemma \ref{Hoggarconnectionlemma}
for a weighted $\tau$-design, as defined by (\ref{taudesigndef}), 
is given in Corollary 2.14 of \cite{L98}.

A general form of the variational inequality (Theorem \ref{CSgeneralvarthm},
Corollary \ref{SidWelchWalIcorollary}) is given 
in \cite{L98} for real and complex valued functions, which includes the 
Welch and Sidelnikov inequalities, but not our quaternionic version.
This ``inequality on the mean'' of a {\bf FDNDF} ({\bf finite dimensional nonnegative definite function}
$F:X\times X\to\FF$, where $\FF=\RR,\CC$, is as follows. A function $F$ is said to be {\bf Hermitian} if
$$ \overline{F(x,y)}=F(y,x), \qquad \forall x,y\in X, $$
and moreover to be {\bf nonnegative definite} if $F|_{C\times C}$ is positive semidefinite
for all finite subsets $C\subset X$ , i.e.,
$$ \sum_{x,y\in C}  \overline{v(x)}F(x,y)v(y)\ge 0, \qquad \forall v:X\to\CC. $$
Such an $F$ is finite dimensional if it can be written
$$ F(x,y) = \sum_{j=1}^n \overline{g_j(x)} g_j(y), $$
for finitely many functions $g_j:X\to\CC$.
Important examples of FDNDF are $\inpro{x,y}$ and $|\inpro{x,y}|^2$ on $\Fd$, $\FF=\RR,\CC$. 
We note that $|\inpro{x,y}|$ is not a FDNDF, and that products of FDNDFs are FDNDF, so that
$|\inpro{x,y}|^{2t}$ is a FDNDF. 
A FDNDF $F$ is said to satisfy the {\bf inequality on the mean} if
\begin{equation}
\label{ineqofmean}
{1\over|C|^2} \sum_{x,y\in C} F(x,y) \ge \int_X\int_X F(x,y)\,d\go(x)\,d\go(y).
\end{equation}
It is shown (Corollary 3.10 \cite{L98}) that $F$ satisfies the inequality on the mean if
$\int_X F(x,y)\,d\go(y)$ does not depend on $x\in X$. For
$$ \hbox{ $\go=\gs$ on $X=\SS(\FF)$, \quad $\FF=\RR,\CC$},\qquad F(x,y)=|\inpro{x,y}|^{2t}, $$
this condition follows from (\ref{ctFdprop}),
with (\ref{ineqofmean}) becoming the (unweighted) version of the Welch and Sidelnikov inequalities,
respectively.
A theory of quaternion valued FDNDFs could be developed (cf.\ \cite{TM14}), which would
yield a corresponding inequality on the mean, giving Theorem \ref{CSgeneralvarthm}
for $\FF=\HH$ (as a particular case).
Instead, we present our original approach, which is  more constructive.





\section{Reproducing kernels and inner products on $\Hom_{\Fd}(t,t)$}
\label{RepHomttSect}

We now establish (Theorem \ref{Hdreproinproexist}) a key fact used to prove the 
variational inequality of
Theorem \ref{CSgeneralvarthm},
i.e., the existence of an inner product on 
$$ \Hom_{\Fd}(t,t):=\spam\{|\inpro{v,\cdot}|^{2t}\}, $$
with the property that
$$ \inpro{K_w,f}_\FF=f(w), \quad \forall f\in \Hom_{\Fd}(t,t), 
\qquad K_w(z):=|\inpro{z,w}|^{2t}, $$
or, in other words, there is an inner product for which $|\inpro{v,w}|^{2t}$ is
the reproducing kernel.

Reproducing kernels for real and complex Hilbert space are well studied,
and the extension to quaternionic Hilbert space follows in the obvious way \cite{TM14}.
We say that an $\FF$-Hilbert space $\cH$ ($\FF=\RR,\CC,\HH$) consisting of functions on a set $X$
is a {\bf reproducing kernel Hilbert space} if there is a ``kernel'' $K_w\in\cH$, $w\in X$,
for which 
\begin{equation}
\label{reproducingprop}
\inpro{K_w,f}=f(w), \quad \forall f\in\cH, \quad\forall w\in X.
\end{equation}
Such a kernel $K_w(z)$ can exist if only if point evaluation is a continuous linear functional.
We now present the basic structure theorem for (finite dimensional) reproducing kernel
Hilbert spaces, in terms of tight frames. A finite set $(f_j)$ in an $\FF$-Hilbert space $\cH$ 
is a {\bf normalised tight frame} (see \cite{W20}, \cite{W18}) if 
\begin{equation}
\label{tightframeprop}
f = \sum_j f_j\inpro{f_j,f}, \qquad \forall f\in\cH.
\end{equation}

\begin{proposition}
\label{reprostructureprop}
Let $(K_w)$ be the reproducing kernel for a finite dimensional $\FF$-Hilbert space, 
with normalised tight frame $(f_j)$. Then its reproducing kernel is
$$ K_w(z) = \sum_j f_j(z) \overline{f_j(w)}. $$
\end{proposition}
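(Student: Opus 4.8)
The plan is to establish the reproducing kernel formula by exploiting the defining property (\ref{tightframeprop}) of a normalised tight frame, applied to a cleverly chosen vector. The key observation is that the reproducing property (\ref{reproducingprop}) says $K_w$ is the unique element of $\cH$ that represents the point-evaluation functional $f\mapsto f(w)$. So it suffices to verify that the proposed function $z\mapsto\sum_j f_j(z)\overline{f_j(w)}$, which lies in $\cH$ since it is an $\FF$-linear combination of the frame elements $f_j$, actually reproduces values at $w$.

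First I would apply the tight frame expansion (\ref{tightframeprop}) to an arbitrary $f\in\cH$, writing $f=\sum_j f_j\inpro{f_j,f}$, and then evaluate this identity at the point $w$. This gives $f(w)=\sum_j f_j(w)\inpro{f_j,f}$. The plan is then to pull the scalar coefficients $f_j(w)$ through the inner product. Using linearity in the second variable together with the rule $\inpro{v,w\ga}=\inpro{v,w}\ga$ from the inner product axioms (and its consequence $\inpro{v\ga,w}=\overline{\ga}\inpro{v,w}$), I would rewrite $f_j(w)\inpro{f_j,f}=\inpro{f_j\overline{f_j(w)},f}$, so that $f(w)=\inpro{\sum_j f_j\overline{f_j(w)},f}$. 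Comparing this with the reproducing identity $f(w)=\inpro{K_w,f}$ and invoking the uniqueness of the representing element (equivalently, nondegeneracy of the inner product) yields $K_w=\sum_j f_j\overline{f_j(w)}$, which is exactly the claimed formula $K_w(z)=\sum_j f_j(z)\overline{f_j(w)}$.

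The main obstacle, and the only point requiring genuine care, is the noncommutativity of $\HH$: in the quaternionic case the scalars $f_j(w)$ and the inner product values do not commute, so the bookkeeping of which side each scalar sits on must respect the module conventions. The safe route is to keep all scalar manipulations consistent with the stated axioms (linearity and conjugate-linearity in the correct variables) rather than freely commuting, and to note that the combination $f_j\overline{f_j(w)}$ places the conjugated scalar on the correct side so that the pairing works out. One should also confirm that $\sum_j f_j\overline{f_j(w)}$ genuinely defines an element of $\cH$ (a finite right-$\HH$-linear combination of the $f_j$), which is immediate since the frame is finite. Everything else is a direct consequence of the tight frame property and the uniqueness guaranteed by the reproducing kernel definition.
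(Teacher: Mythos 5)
Your argument is correct, and the quaternionic bookkeeping is handled properly: evaluating the right-module expansion $f=\sum_j f_j\inpro{f_j,f}$ at $w$ keeps the scalars on the right, and the move $f_j(w)\inpro{f_j,f}=\inpro{f_j\overline{f_j(w)},f}$ is exactly the axiom $\inpro{v\ga,w}=\overline{\ga}\inpro{v,w}$ applied on the correct side. Your route differs mildly from the paper's, which runs the same two ingredients in the opposite direction: it applies the tight frame expansion to the special vector $K_w$ itself and reads off the frame coefficients via conjugate symmetry and the reproducing property, $K_w=\sum_j f_j\inpro{f_j,K_w}=\sum_j f_j\overline{\inpro{K_w,f_j}}=\sum_j f_j\overline{f_j(w)}$, then confirms the pointwise formula by computing $K_w(z)=\inpro{K_z,K_w}$ with the identity $\sum_k f_k(z)\inpro{f_k,f_j}=f_j(z)$. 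You instead expand a generic $f$, recognize $f(w)$ as the pairing of $f$ against the candidate kernel $G_w:=\sum_j f_j\overline{f_j(w)}$, and conclude by uniqueness of the representing element; that uniqueness step, which the paper's direct computation never needs, is where positive definiteness enters (if $\inpro{K_w-G_w,f}=0$ for all $f\in\cH$, take $f=K_w-G_w$), and you correctly flag it as nondegeneracy. What each buys: the paper's version is a pure computation exhibiting $K_w$ from its own frame coefficients, while yours isolates the conceptual point that the kernel is the unique Riesz-type representer of point evaluation, and once the identity $K_w=G_w$ holds in $\cH$, evaluation at $z$ (scalars acting on the right) yields the displayed formula without any analogue of the paper's second computation.
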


\begin{proof} 
Since all linear functionals on finite dimensional Hilbert spaces
are continuous, in particular the point evaluations, the Hilbert space has a reproducing kernel.
By the tight frame expansion (\ref{tightframeprop}) and the reproducing 
property (\ref{reproducingprop}), we have
$$ K_w 
= \sum_j f_j\inpro{f_j,K_w}
= \sum_j f_j\overline{\inpro{K_w,f_j}}
= \sum_j f_j \overline{f_j(w)}, $$
so that
\begin{align*}
K_w (z) &= \inpro{K_z,K_w} 
= \inpro{ \sum_k f_k \overline{f_k(z)}, \sum_j f_j \overline{f_j(w)} }
= \sum_j \Bigl(\sum_k {f_k(z)} \inpro{ f_k , f_j } \Bigr) \overline{f_j(w)} \cr
&= \sum_j f_j(z) \overline{f_j(w)}.
\end{align*}
\vskip-1truecm
\end{proof}

The desired inner product on the spaces
$\Hom_{\Rd}(t,t)$ and $\Hom_{\Cd}(t,t)$ is well known. 
We now give these, as a consequence of the multinomial theorem.
Let $i_1,\ldots,i_4\in\HH$ be given by
\begin{equation}
\label{irdefn}
i_1:=1, \qquad i_2:=i, \qquad i_3=j, \qquad i_4:=k.
\end{equation}
For a polynomial $f(x)$, in the variables $x=(x_1,\ldots,x_d)\in\Fd$,
\begin{equation}
\label{xjr-expansion}
x_j=x_{j1}i_1+x_{j2}i_2+\cdots +x_{jm}i_m\in \FF, \quad x_{j1},\ldots,x_{jm}\in\RR, \qquad 1\le j\le d,
\end{equation}
we the define the differential operator $f(D)$ by replacing $x_{jk}$ by ${\partial\over\partial x_{jk}}$,
in the usual way.
Also for $f(z)=z^\ga\overline{x}^\gb$, $z\in\Cd$, $z_j=x_j+i y_j$, we define $f(\partial)$ to be 
the differential operator obtained by replacing $z^\ga$ and $\overline{z}^\gb$ by 
$\partial^\ga$ and $\overline{\partial}$, the multivariate Wirtinger operators given by
$$ \partial_j = {\partial\over\partial z_j} =  {1\over 2}\left({\partial\over\partial x_j}
-i {\partial\over\partial y_j} \right), \qquad
\overline{\partial}_j = {\partial\over\partial \overline{z_j}} 
=  {1\over 2}\left({\partial\over\partial x_j} +i {\partial\over\partial y_j} \right). $$


\begin{example}
\label{HomttRd}
The space $\Hom_{\Rd}(t,t)$ is $\Hom_{\Rd}(2t)$, the homogeneous polynomials
of degree $2t$. Each polynomial can be written in terms of the monomial basis
$$ f(x)=\sum_{|\ga|=2t} f_\ga x^\ga, \qquad f_\ga\in\RR. $$
By the multinomial theorem,
$$ K_w(z)
=(\inpro{w,z})^{2t}
=\Bigl(\sum_j w_jz_j\Bigr)^{2t}
=\sum_{|\ga|=2t} {2t\choose\ga} w^\ga z^\ga, $$
so that
$$ \inpro{K_w,(\cdot)^\gb}_\RR
= \inpro{\sum_{|\ga|=2t} {2t\choose\ga} w^\ga (\cdot)^\ga ,(\cdot)^\gb}_\RR
= \sum_{|\ga|=2t} {2t\choose\ga} \inpro{ (\cdot)^\ga ,(\cdot)^\gb}_\RR w^\ga 
= w^\gb, \quad\forall w,\ \forall \gb $$
if and only if
${2t\choose\ga} \inpro{(\cdot)^\ga,(\cdot)^\gb}_\RR =\gd_{\ga\gb}$, 
which gives the inner product
\begin{equation}
\label{BombieriInnerprod}
\inpro{f,g}_\RR={1\over (2t)!}\sum_{|\ga|=2t} \ga! f_\ga g_\ga
={1\over (2t)!}\sum_{|\ga|=2t} {D^\ga f(0) D^\ga g(0)\over\ga!}
={1\over(2t)!} f(D)g.
\end{equation}
\end{example}

The inner product (\ref{BombieriInnerprod}) is variously known as the 
{\it Bombieri} inner product \cite{Z94} or the {\it apolar} inner product/pairing \cite{V00}. 
In this (unitarily invariant) inner product, the monomials are orthogonal, and so it is 
not a scalar multiple of the one given by integration on $\SS$ (for which $x_1^2$ and
$x_2^2$ are not orthogonal).

\begin{example}
\label{HomttCd}
The space $\Hom_{\Cd}(t,t)$ has a basis given by the monomials 
$$ m_{\ga,\gb}:z\mapsto z^\ga\overline{z}^\gb,  \qquad |\ga|=|\gb|=t, $$
so that each $f\in\Hom_{\Cd}(t,t)$ can be written uniquely
$$ f= \sum_{|\ga|=|\gb|=t} f_{\ga,\gb}\, m_{\ga,\gb}, \qquad f_{\ga,\gb}\in\CC. $$
The multinomial theorem gives
$$ K_w(z)
=(\inpro{w,z})^{t} (\inpro{z,w})^{t}
= \Bigl(\sum_j \overline{w_j}z_j \Bigr)^{t} \Bigl(\sum_k w_k\overline{z_k} \Bigr)^{t}
= \sum_{|\ga|=t} {t\choose\ga} \overline{w}^\ga z^\ga 
\sum_{|\gb|=t} {t\choose\gb} w^\gb \overline{z}^\gb, $$
so that
\begin{align*}
\inpro{K_w, m_{a,b}}_\CC
&= \inpro{\sum_{|\ga|=|\gb|=t} {t\choose\ga}  {t\choose\gb} 
\overline{w}^\ga w^\gb m_{\ga,\gb}, m_{a,b}}_\CC \cr
&= \sum_{|\ga|=|\gb|=t} {t\choose\ga}  {t\choose\gb} 
{w}^\ga \overline{w}^\gb \inpro{ m_{\ga,\gb},m_{a,b} }_\CC 
= w^a\overline{w}^b, \qquad \forall w,\ \forall a,b, 
\end{align*}
if and only if
${t\choose\ga}  {t\choose\gb} \inpro{ m_{\ga,\gb}, m_{a,b} }_\CC =\gd_{(\ga,\gb),(a,b)}$,
which gives
\begin{equation}
\label{BombieriInnerprodC}
\inpro{f,g}_\CC 
= {1\over t!^2} \sum_{|\ga|=|\gb|=t}  \ga!\gb!\, \overline{f_{\ga,\gb}}g_{\ga,\gb}
= {1\over t!^2} \tilde f (\partial) g,
\end{equation}
where $\tilde f(z):=\overline{f(\overline{z})}$, i.e,
$\tilde f=\sum_{|\ga|=|\gb|=t}
\overline{f_{\ga,\gb}}\, m_{\ga,\gb}$.
\end{example}

The inner product (\ref{BombieriInnerprodC}) can be found in \cite{KP17} and \cite{W17}.
Initially, it seemed to be impossible to find the quaternion analogue of 
(\ref{BombieriInnerprod}) and (\ref{BombieriInnerprodC}), without 
first finding a basis for $\Hom_{\Hd}(t,t)$ and an analogue of the Wirtinger calculus. 
With hindsight, we observe that
$$ f(z)=z_j \Implies \overline{f}(z):=\overline{f(z)}=\overline{z_j} \Implies 
f(D)= {\partial\over\partial x_j} -i {\partial\over\partial y_j} = 2f(\partial_j), $$
$$ f(z)=\overline{z_j} \Implies \overline{f}(z):=\overline{f(z)}={z_j} \Implies 
f(D)= {\partial\over\partial x_j} +i {\partial\over\partial y_j} = 2f(\overline{\partial}_j), $$
so that
\begin{equation}
\label{BombieriInnerprodCII}
\inpro{f,g}_\CC 
= {1\over t!^2} \tilde f (\partial) g
= {1\over t!^2 2^{2t}} \overline{f}(D) g.
\end{equation}

In Theorem \ref{Hdreproinproexist},
we give the quaternionic analogue of the (apolar) inner products
(\ref{BombieriInnerprod}) and (\ref{BombieriInnerprodCII}).

To understand $\Hom_\Hd(t,t)$, we first consider the simple example $t=1$, $d=2$.

\begin{example} 
\label{sixbasisexample}
$\Hom_{\HH^2}(1,1)$ has the following basis of six real-valued polynomials
\begin{align}
\label{sixbasis}
|q_1|^2 &= t_1^2 + x_1^2 + y_1^2 + z_1^2, \cr
|q_2|^2 &= t_2^2 + x_2^2 + y_2^2 + z_2^2, \cr
\Re(q_1\overline{q_2})   &= t_1t_2 + x_1x_2 + y_1y_2 + z_1z_2, \cr
 \Re(q_1\overline{q_2}i) &= t_1x_2 -x_1t_2 + y_1z_2 -z_1y_2, \cr
 \Re(q_1\overline{q_2}j) &= t_1y_2 -x_1z_2 -y_1t_2 + z_1x_2, \cr
 \Re(q_1\overline{q_2}k) &= t_1z_2 + x_1y_2 -y_1x_2 -z_1t_2,
\end{align}
where $q_a=t_a+x_a i +y_a j+z_a k$. To see this,
we expand $|\inpro{v,q}|^2= \inpro{v,q}\inpro{q,v}$ as
$$ |\inpro{v,q}|^2
= (\overline{v_1}q_1+\overline{v_2}q_2)
  (\overline{q_1}v_1+\overline{q_2}v_2)
= |q_1|^2|v_1|^2+|q_2|^2|v_2|^2
 + \overline{v_1}q_1\overline{q_2}v_2+\overline{v_2}q_2\overline{q_1}v_1,$$
where
$$ \overline{v_1}q_1\overline{q_2}v_2+\overline{v_2}q_2\overline{q_1}v_1
= 2\Re(\overline{v_1}q_1\overline{q_2}v_2)
= 2\Re( q_1\overline{q_2}v_2 \overline{v_1}), $$
by (\ref{realcommute}).
Write $v_2 \overline{v_1}=\ga=\ga_1+\ga_2 i+\ga_3 j+\ga_4 k$, $\ga_j\in\RR$,
and expand, to obtain
\begin{align*}
 |\inpro{v,q}|^2
&= |q_1|^2|v_1|^2+|q_2|^2|v_2|^2 + 2\Re( q_1\overline{q_2} (\ga_1+\ga_2 i+\ga_3 j+\ga_4 k)) \cr
&= |q_1|^2|v_1|^2+|q_2|^2|v_2|^2 + 2\Re( q_1\overline{q_2})\ga_1
+2\Re( q_1\overline{q_2}i)\ga_2
+2\Re( q_1\overline{q_2}j)\ga_3
+2\Re( q_1\overline{q_2}k)\ga_4,
\end{align*}
where 
$\ga_r=\Re(\overline{\ga}i_r)=\Re(v_1\overline{v_2} i_r)$ and
$(i_1,i_2,i_3,i_4)=(1,i,j,k)$.
Thus the six linearly independent polynomials in (\ref{sixbasis}) span $\Hom_{\HH^2}(1,1)$,
and hence are a basis. 
\end{example}

The above calculation generalises, to give the following.

\begin{lemma}
\label{Hom11spanexp}
The spanning polynomials $|\inpro{v,\cdot}|^2$, $v\in\Hd$, for $\Hom_\Hd(1,1)$ can be
written
\begin{equation}
\label{H11basisexp}
|\inpro{v,q}|^2 = \sum_{j=1}^d |q_j|^2|v_j|^2
+\sum_{1\le j< k\le d} \sum_{r=1}^4 2\Re(q_j\overline{q_k}i_r) \, 
\Re(v_j\overline{v_k} i_r),
\end{equation}
where $(i_1,i_2,i_3,i_4)=(1,i,j,k)$ 
and the $d+4{d\choose2}$ polynomials 
\begin{align}
\label{H11basispolys}
p_j:&\,q\mapsto |q_j|^2,   \qquad\ \qquad\, 1\le j\le d, \cr
p_{jkr}:\,&q\mapsto \Re(q_j\overline{q_k}i_r),   \qquad
1\le j<k\le d, \  1\le r\le 4,
\end{align}
are a basis for $\Hom_\Hd(1,1)$.
\end{lemma}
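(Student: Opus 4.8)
The plan is to generalise the $d=2$ calculation of Example \ref{sixbasisexample} verbatim to arbitrary dimension $d$, and then establish that the resulting spanning set is in fact a basis by a dimension count. First I would expand the generic spanning polynomial $|\inpro{v,q}|^2 = \inpro{v,q}\inpro{q,v}$ using the coordinate formula for the inner product. Writing $\inpro{v,q}=\sum_j \overline{v_j}q_j$ and $\inpro{q,v}=\sum_k \overline{q_k}v_k$, the product splits into diagonal terms $\overline{v_j}q_j\overline{q_j}v_j = |q_j|^2|v_j|^2$ (using that $|q_j|^2$ is real and hence central) and off-diagonal cross terms indexed by $j\ne k$. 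The cross terms pair up as $\overline{v_j}q_j\overline{q_k}v_k + \overline{v_k}q_k\overline{q_j}v_j$, and the second summand is the conjugate of the first, so their sum equals $2\Re(\overline{v_j}q_j\overline{q_k}v_k)$. Applying the commutativity relation (\ref{realcommute}), $\Re(ab)=\Re(ba)$, I move the scalar factor $v_k\overline{v_j}$ to the right to get $2\Re(q_j\overline{q_k}\,v_k\overline{v_j})$.

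Next I would decompose the quaternion $v_k\overline{v_j}=\ga_1+\ga_2 i+\ga_3 j+\ga_4 k$ into real and imaginary parts. Since $\Re$ is $\RR$-linear, $2\Re(q_j\overline{q_k}\,v_k\overline{v_j})=\sum_{r=1}^4 2\Re(q_j\overline{q_k}i_r)\,\ga_r$, and the coefficient $\ga_r=\Re(\overline{v_k\overline{v_j}}\,i_r)=\Re(v_j\overline{v_k}\,i_r)$ is exactly the scalar multiplying the basis polynomial $p_{jkr}$ in (\ref{H11basisexp}). Summing over the pairs $1\le j<k\le d$ (each unordered pair collecting both ordered cross terms) yields precisely the claimed expansion (\ref{H11basisexp}), which shows the $d+4\binom{d}{2}$ polynomials in (\ref{H11basispolys}) span $\Hom_\Hd(1,1)$.

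It remains to prove these spanning polynomials are linearly independent, which is the crux of the argument. The count $d+4\binom{d}{2}$ must match $\dim_\HH(\Hom_{\Hd}(1,1))$, so I would verify independence directly. Writing each $q_a=t_a+x_a i+y_a j+z_a k$ in terms of the $4d$ real variables, each polynomial $p_j$ and $p_{jkr}$ is a real quadratic form in these variables; the $p_j=|q_a|^2$ involve only the four ``diagonal'' squares in block $a$, while each $p_{jkr}$ is a bilinear form coupling block $j$ to block $k$ (as displayed explicitly in Example \ref{sixbasisexample} for $d=2$). Because distinct pairs $(j,k)$ touch disjoint pairs of coordinate blocks and the four forms $\Re(q_j\overline{q_k}i_r)$, $r=1,\dots,4$, for a fixed pair are manifestly linearly independent (they are the four ``components'' of the quaternionic product $q_j\overline{q_k}$, corresponding to the real part of each of $1,i,j,k$ times it), any real linear dependence forces every coefficient to vanish. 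The main obstacle I anticipate is making this independence argument clean rather than by brute inspection of monomials: the cleanest route is to observe that these polynomials arise as the distinct real components of the off-diagonal entries of the Gram/outer-product matrix $q\overline{q}^{\,*}$ together with its diagonal $|q_j|^2$, so their independence reduces to the independence of the coordinate functions on $\Hd$, each monomial $t_at_b, x_ax_b,\ldots$ appearing in exactly one of them. I would also confirm the dimension formula against (\ref{Homrdim}) as a consistency check, noting $\dim_\HH(\Hom_{\Hd}(1,1))\le\dim_\HH(\Hom_2(\Hd))$ and that the asserted count $d+4\binom{d}{2}=d+2d(d-1)=2d^2-d$ is independent of the ambient $\binom{4d+1}{4d-1}$, so independence (not just a dimension bound) is genuinely needed.
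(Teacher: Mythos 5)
Your proposal is correct and matches the paper's proof in essence: the expansion \eqref{H11basisexp} is obtained by the same verbatim generalisation of Example \ref{sixbasisexample} (conjugate pairing of cross terms, the relation \eqref{realcommute}, and the decomposition of $v_k\overline{v_j}$ over $(1,i,j,k)$), and your disjoint-monomial-support argument for independence is exactly the mechanism the paper packages as explicit coordinate functionals $f\mapsto \tfrac12\partial^2_{t_j}f(0)$ and $f\mapsto \partial_{t_j}\partial_{(q_k)_r}f(0)$, which simply extract the coefficients of the monomials you identify. The only hair worth noting (which the paper glosses over equally) is that spanning strictly shows $\Hom_{\Hd}(1,1)\subseteq\spam\{p_j,p_{jkr}\}$, and one should also observe the reverse containment, e.g.\ $p_j=|\inpro{e_j,\cdot}|^2$ and $2p_{jks}=|\inpro{e_j+e_ki_s,\cdot}|^2-|\inpro{e_j,\cdot}|^2-|\inpro{e_k,\cdot}|^2$, so the $p$'s indeed lie in $\Hom_{\Hd}(1,1)$.
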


\begin{proof}
The expansion (\ref{H11basisexp}) follows as in 
Example \ref{sixbasisexample}. Moreover, the polynomials in 
(\ref{H11basispolys}) are linearly independent. This is easily 
seen from the formulas for them given in 
(\ref{sixbasis}), e.g., the coordinate functionals are 
given explicitly by
\begin{align}
\label{H11basisdualfunctionals}
&f\mapsto {1\over2}\partial^2_{t_j}f(0),    \qquad\quad\ \, 1\le j\le d, \cr
&f\mapsto \partial_{t_j}\partial_{(q_k)_r}f(0),   \qquad
1\le j<k\le d, \  1\le r\le 4,
\end{align}
where $q_a=t_a+x_ai+y_aj+z_ak$.
\end{proof}

The expansion (\ref{H11basisexp}) can also be written in the symmetric 
(but redundant) form
\begin{equation}
\label{H11basisexpsymmform}
|\inpro{v,q}|^2 = \sum_{j=1}^d \sum_{k=1}^d \sum_{r=1}^4 
\Re(q_j\overline{q_k}i_r) \, 
\Re(v_j\overline{v_k} i_r).
\end{equation}
Let 
\begin{equation}
\label{PandQdefn}
P=(p_j)_{1\le j\le d}\cup(\sqrt{2}p_{jkr})_{1\le j <k\le d,1\le r\le4}, \qquad
Q=(p_{jkr})_{1\le j,k\le d,1\le r\le4},
\end{equation}
be the basis and spanning set for $\Hom_{\Hd}(1,1)$ given by
the polynomials defined in (\ref{H11basispolys}).
Since the coordinates of $P$ and $Q$ are real-valued functions,
we can take monomials in $P$ and $Q$ in the usual way, i.e.,
$$ P^\ga =
\Bigl(\prod_{1\le j\le d} p_j^{\ga_j}\Bigr)
\Bigl(\prod_{1\le j <k\le d\atop1\le r\le4} (\sqrt{2}p_{jkr})^{\ga_{jkr}}\Bigr), 
\qquad
Q^\gb=
\prod_{1\le j,k\le d\atop1\le r\le4} p_{jkr}^{\gb_{jkr}}, $$
where $\ga$ and $\gb$ are multi-indices defined on the index sets of $P$ and $Q$.



\begin{theorem}
\label{tightframeHomtttheorem}
Let $P$ and $Q$ be given by (\ref{PandQdefn}).
There is a unique inner product on $\Hom_{\Hd}(t,t)$ 
for which $({t\choose\ga}^{1/2}P^\ga)_{|\ga|=t}$ and
$({t\choose\gb}^{1/2}Q^\gb)_{|\gb|=t}$ are normalised tight frames.
The reproducing kernel for this inner product is $K_w(z)=|\inpro{z,w}|^{2t}$.
\end{theorem}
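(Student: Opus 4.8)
The plan is to reduce everything to the $t=1$ expansions of Lemma~\ref{Hom11spanexp} raised to the $t$-th power, and then to invoke the reproducing-kernel/tight-frame correspondence of Proposition~\ref{reprostructureprop}. Throughout I would treat $\Hom_{\Hd}(t,t)$ as a space of real-valued polynomials (the span may be taken over $\RR$, as noted after (\ref{Homttdefn})), so that every $P^\ga$ and $Q^\gb$ is real-valued and all conjugations below are trivial.

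First I would raise the two $t=1$ identities to the power $t$. Writing (\ref{H11basisexp}) as $|\inpro{v,q}|^2=\sum_\ell P_\ell(q)P_\ell(v)$, where $P_\ell$ ranges over the basis $P$ of (\ref{PandQdefn}) (the $\sqrt2$ absorbed into the off-diagonal entries), the multinomial theorem gives
\begin{equation*}
K_w(z)=|\inpro{z,w}|^{2t}=\Bigl(\sum_\ell P_\ell(z)P_\ell(w)\Bigr)^t=\sum_{|\ga|=t}{t\choose\ga}P^\ga(z)P^\ga(w)=\sum_{|\ga|=t}\Bigl({t\choose\ga}^{1/2}P^\ga(z)\Bigr)\Bigl({t\choose\ga}^{1/2}P^\ga(w)\Bigr),
\end{equation*}
and the symmetric form (\ref{H11basisexpsymmform}) yields the identical expansion with $Q^\gb$ in place of $P^\ga$. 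Since the $K_w$ span $\Hom_{\Hd}(t,t)$ by definition (\ref{Homttdefn}), and each $K_w=\sum_{|\ga|=t}{t\choose\ga}P^\ga(w)\,P^\ga$ lies in the real span of $\{P^\ga:|\ga|=t\}$ (and likewise of $\{Q^\gb:|\gb|=t\}$), both families span $\Hom_{\Hd}(t,t)$.

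Next I would produce the inner product. For any finite family spanning a finite-dimensional $\FF$-Hilbert space there is a unique inner product making it a normalised tight frame, namely the one for which the frame (synthesis) operator $T$ is a co-isometry, equivalently for which the Gram operator is the orthogonal projection onto $(\ker T)^\perp$ (finite-frame theory of \cite{W18}, \cite{W20}); positive-definiteness is automatic since $T$ restricted to $(\ker T)^\perp$ is injective. Applying this to $({t\choose\ga}^{1/2}P^\ga)_{|\ga|=t}$ and $({t\choose\gb}^{1/2}Q^\gb)_{|\gb|=t}$ gives inner products $\inpro{\cdot,\cdot}_P$ and $\inpro{\cdot,\cdot}_Q$ making the respective families normalised tight frames. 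By Proposition~\ref{reprostructureprop} (and the reality of the frame vectors), the reproducing kernel of $\inpro{\cdot,\cdot}_P$ is $\sum_{|\ga|=t}{t\choose\ga}P^\ga(z)P^\ga(w)$, which by the displayed identity equals $|\inpro{z,w}|^{2t}=K_w(z)$; the same holds for $\inpro{\cdot,\cdot}_Q$.

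Finally I would merge the two using the rigidity of reproducing kernels: the relation $\inpro{K_w,f}=f(w)$ fixes the functional $\inpro{K_w,\cdot}$ for every $w$, and since $\{K_w\}$ spans, bilinearity then fixes the inner product completely. As $\inpro{\cdot,\cdot}_P$ and $\inpro{\cdot,\cdot}_Q$ share the kernel $K_w$, they coincide; call the common inner product $\inpro{\cdot,\cdot}_\HH$. It makes both families normalised tight frames with reproducing kernel $K_w(z)=|\inpro{z,w}|^{2t}$ (existence), and any inner product for which $({t\choose\ga}^{1/2}P^\ga)$ is a normalised tight frame has kernel $K_w$ by Proposition~\ref{reprostructureprop} and hence equals $\inpro{\cdot,\cdot}_\HH$ (uniqueness). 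I expect the only genuine subtlety to be the existence step --- guaranteeing a bona fide positive-definite inner product for the redundant family $Q$ in the noncommutative setting; routing existence through the synthesis operator, rather than through positive-definiteness of the Gram matrix $(|\inpro{w_i,w_j}|^{2t})$, sidesteps any quaternionic Schur-product question, while Proposition~\ref{reprostructureprop} handles the noncommutative bookkeeping.
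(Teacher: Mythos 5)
Your proposal is correct and follows essentially the same route as the paper: the multinomial expansion of $|\inpro{z,w}|^{2t}$ from Lemma~\ref{Hom11spanexp} (and its symmetric form), the unique ``canonical'' inner product making a finite spanning family a normalised tight frame (your synthesis-operator/co-isometry formulation is just the operator-theoretic phrasing of the paper's minimal-norm canonical coordinates from \cite{W11}), and Proposition~\ref{reprostructureprop} to identify the kernel. The only differences are presentational and to your credit: you make explicit the kernel-rigidity step merging $\inpro{\cdot,\cdot}_P$ and $\inpro{\cdot,\cdot}_Q$, which the paper leaves implicit in ``the argument for $Q$ being the same,'' and you sidestep the appeal to a quaternionic version of the canonical-inner-product theorem by working over $\RR$ with real-valued frame vectors, as the remark after (\ref{Homttdefn}) permits.
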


\begin{proof} In \cite{W11} a notion of {\it canonical coordinates} 
for a finite spanning sequence for a real or complex vector space was developed, from which 
it follows (Theorem 4.3 of \cite{W18}) that there is a unique (canonical) inner product for
which it is a normalised tight frame. We now appeal to the quaternionic version of this
result, which holds. Briefly, for a given spanning sequence $(f_j)_{j=1}^n$ and $f$, the set of
coefficients $c=(c_j)$, for which $f=\sum_j c_j f_j$, is an affine subspace of $\FF^n$,
and hence it has a unique element $c=c(f)\in\FF^n$ which minimises $\sum_j|c_j|^2$.
The functional $f\mapsto c(f)$ is linear, and the canonical inner product between 
$f$ and $g$ is defined to be $\inpro{c(f),c(g)}$. 

We consider 
$P$ (the argument for $Q$ being the same).
We may write (\ref{H11basisexp}) as
$$ |\inpro{w,z}|^2 = \sum_j P_j(z) \overline{P_j(w)}, \qquad P=(P_j),$$
and so the multinomial theorem gives 
$$ |\inpro{w,z}|^{2t} 
= \Bigl( \sum_j P_j(z) \overline{P_j(w)} \Bigr)^t
= \sum_{|\ga|=t} {t\choose\ga} P^\ga(z) \overline{P^\ga(w)}. $$
This motivates our choice. Let $\inpro{\cdot,\cdot}_P$ be the inner product
for which $({t\choose\ga}^{1/2}P^\ga)_{|\ga|=t}$ is a normalised tight frame
for its span $\cH$.
By Proposition \ref{reprostructureprop} and the above, the
reproducing kernel for $\cH$ is
$$ K_w(z) 
= \sum_{|\ga|=t} \Bigl({t\choose\ga}^{1/2}\Bigr)^2 P^\ga(z) \overline{P^\ga(w)}
= |\inpro{w,z}|^{2t}, $$
with $\cH=\spam\{ K_w:w\in\Hd\}=\Hom_{\Hd}(t,t)$.
\end{proof}


The polynomials $\{P^\ga\}_{|\ga|=t}$ are not a basis for $\Hom_{\Hd}(t,t)$, in general.

\begin{example}
The space $\Hom_{\HH^2}(2,2)$ has dimension $20$, and there are
$21$ 
polynomials in $\{P^\ga\}_{|\ga|=2}$. 
Therefore there is one linear dependency, which is
$$ \Re(q_1\overline{q_2})^2 +\Re(q_1\overline{q_2}i)^2 +\Re(q_1\overline{q_2}j)^2 +\Re(q_1\overline{q_2}k)^2
= |q_1|^2 |q_2|^2. $$
From this, it follows that $\{P^\ga\}_{|\ga|=t}$ is not a basis for $\Hom_{\Hd}(t,t)$, 
for $t\ge2$ (and $d>1$).
\end{example}

Theorem \ref{tightframeHomtttheorem} is sufficient to prove Theorem \ref{CSgeneralvarthm}.
We now give a more constructive version (Theorem \ref{Hdreproinproexist}). 
With the notation of (\ref{irdefn}), a simple calculation gives
\begin{equation}
\label{irqirsums}
\sum_{r=1}^m i_r q i_r =  
\begin{cases}
q, & m=1; \cr
0, & m=2; \cr
-2\overline{q}, & m=4,
\end{cases}\qquad
\sum_{r=1}^m i_r q \overline{i_r} =
\begin{cases}
q, & m=1; \cr
2q, & m=2; \cr
4\Re(q), & m=4,
\end{cases}
\qquad q\in\FF, 
\end{equation}
where $m=\dim_\RR(\FF)$. Let $\gD=\sum_j\sum_r {\partial^2\over\partial x_{jr}^2}$ be
the Laplacian on functions $\Fd\to\FF$.

\begin{proposition} 
\label{linearplanewaveaction}
For $w,v\in\Fd$, we have
\begin{equation}
\label{linlinderiv}
\inpro{w,D}\inpro{\cdot,v}
=m\inpro{w,v}, \qquad
\inpro{w,D}\inpro{v,\cdot}=(2-m)\inpro{w,v},  
\end{equation}
\begin{equation}
\label{linlinderivII}
\inpro{D,w}\inpro{\cdot,v}=(2-m)\inpro{w,v}, 
\qquad
\inpro{D,w}\inpro{v,\cdot}=
\begin{cases}
m\inpro{v,w}, & m=1,2; \cr
4\Re\inpro{v,w}, & m=4.
\end{cases}
\end{equation}
\end{proposition}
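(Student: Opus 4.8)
The plan is to reduce each of the four identities to the contraction formulas (\ref{irqirsums}) by expanding everything in the $md$ real coordinates $x_{jr}$ of (\ref{xjr-expansion}). Both the first-order operators and the linear forms $\inpro{\cdot,v}$, $\inpro{v,\cdot}$ are linear in these coordinates, and the partial derivatives $\partial/\partial x_{jr}$ are real and hence central, so it suffices to track the action of a single derivative on each linear form and then resum over $j,r$; there is no loss in working coordinate by coordinate.

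First I would make the two operators explicit. From $\inpro{w,\cdot}=\sum_j\overline{w_j}x_j$ together with $x_j=\sum_r x_{jr}i_r$ one reads off $\inpro{w,D}=\sum_j\overline{w_j}\sum_r i_r\,\partial/\partial x_{jr}$, while from the conjugate form $\inpro{\cdot,w}=\sum_j\overline{x_j}w_j$ with $\overline{x_j}=\sum_r x_{jr}\overline{i_r}$ one gets $\inpro{D,w}=\sum_j\bigl(\sum_r\overline{i_r}\,\partial/\partial x_{jr}\bigr)w_j$. Thus the quaternion unit carried in from the derivative is $i_r$ for $\inpro{w,D}$ and $\overline{i_r}$ for $\inpro{D,w}$, sitting to the left of the derivative, while the scalar $\overline{w_j}$ or $w_j$ occupies the conjugated or the unconjugated slot, respectively. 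Next I would record the partials of the targets: $\inpro{x,v}=\sum_{k,s}x_{ks}\overline{i_s}v_k$ gives $\partial\inpro{x,v}/\partial x_{jr}=\overline{i_r}v_j$ (unit to the left of $v_j$), whereas $\inpro{v,x}=\sum_{k,s}x_{ks}\,\overline{v_k}i_s$ gives $\partial\inpro{v,x}/\partial x_{jr}=\overline{v_j}i_r$ (unit to the right).

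Feeding these in, every cross term with $k\neq j$ or $s\neq r$ dies, and each of the four products collapses to a single sum over $r$ of two units placed around a fixed quaternion, scaled by $\overline{w_j}$ or $w_j$ and summed over $j$. The conjugation/placement pattern then selects the contraction: in $\inpro{w,D}\inpro{\cdot,v}$ the two units are adjacent and give $\sum_r i_r\overline{i_r}=m$; in $\inpro{D,w}\inpro{\cdot,v}$ they are again adjacent but give $\sum_r\overline{i_r}^2=2-m$; in $\inpro{w,D}\inpro{v,\cdot}$ they sandwich $v_j$ as $\sum_r i_r(\cdot)i_r$, the first formula of (\ref{irqirsums}); and in $\inpro{D,w}\inpro{v,\cdot}$ they sandwich as $\sum_r\overline{i_r}(\cdot)i_r=\sum_r i_r(\cdot)\overline{i_r}$, the second formula of (\ref{irqirsums}). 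Substituting the closed forms of (\ref{irqirsums}) and resumming over $j$, with the conjugation dictated by the inner-product slot, produces the factors $m$, $2-m$, and for $m=4$ the real part $4\Re$, giving the four stated right-hand sides.

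The step I expect to be the main obstacle is precisely this noncommutative bookkeeping of left/right multiplication by the units $i_r$. Since $\partial/\partial x_{jr}$ is central but the $i_r$ are not, one must keep each unit on the side forced by (\ref{xjr-expansion}) and by the conjugate-linearity of $\inpro{\cdot,\cdot}$ in its first slot; it is exactly this placement that decides whether one meets the adjacent products $i_r\overline{i_r}=1$, $\overline{i_r}^2=-1$ (for $r\ge2$), or the sandwiches $\sum_r i_r q\,i_r$ versus $\sum_r i_r q\,\overline{i_r}$ of (\ref{irqirsums}), and hence whether the answer carries a factor $m$, a factor $2-m$, or the real part $4\Re\inpro{v,w}$. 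Once the placement is fixed each identity is a one-line application of (\ref{irqirsums}), so the whole content of the proposition is the careful tracking of these units.
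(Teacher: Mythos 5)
Your overall strategy is exactly the paper's: expand the operators and linear forms in the $md$ real coordinates via (\ref{xjr-expansion}), let each $\partial/\partial x_{jr}$ pick out a single term, and contract with (\ref{irqirsums}) --- and your treatment of the two $\inpro{w,D}$ identities is correct. But your bookkeeping for the two $\inpro{D,w}$ identities, the very step you flag as the main obstacle, is wrong. From $\inpro{x,w}=\sum_{j,r}x_{jr}\overline{i_r}w_j$ (real $x_{jr}$ pulled to the front) the operator is $\inpro{D,w}=\sum_{j,r}\overline{i_r}\,w_j\,\partial/\partial x_{jr}$, with $w_j$ sitting \emph{between} the unit and the derivative, not outside on the right as in your formula $\sum_j\bigl(\sum_r\overline{i_r}\,\partial/\partial x_{jr}\bigr)w_j$. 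Consequently, in $\inpro{D,w}\inpro{\cdot,v}$ the two units are \emph{not} adjacent: the surviving term is $\overline{i_r}\,w_j\,\overline{i_r}\,v_j$, so the units sandwich $w_j$, and since $\overline{i_r}q\overline{i_r}=i_rqi_r$ termwise, it is the \emph{first} sandwich formula of (\ref{irqirsums}) that applies, giving $\sum_r i_r w_j i_r=-2\overline{w_j}$ for $m=4$. That sandwich is the only source of the conjugate $\overline{w_j}$ hidden in the answer $(2-m)\inpro{w,v}$; your phrase ``conjugation dictated by the inner-product slot'' cannot be right, since $w$ occupies the unconjugated slot of $\inpro{D,w}$ yet emerges conjugated. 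Your adjacent contraction $\sum_r\overline{i_r}^{\,2}=2-m$ happens to reproduce the scalar $-2$ but attaches it to an unconjugated $w_j$. Concretely, for $d=1$, $m=4$, $w=i$, $v=1$, the operator is $i\,\partial/\partial x_{11}+\partial/\partial x_{12}+k\,\partial/\partial x_{13}-j\,\partial/\partial x_{14}$, and applying it to $\inpro{x,v}=\overline{x_1}$ gives $2i=-2\,\overline{w}\,v=(2-m)\inpro{w,v}$, whereas your recipe yields $-2i$.

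The fourth identity has the same placement defect. With $w_j$ correctly inside, the units sandwich the product $w_j\overline{v_j}$, and the second formula of (\ref{irqirsums}) (via $\overline{i_r}q\,i_r=i_rq\,\overline{i_r}$ termwise) gives $4\Re(w_j\overline{v_j})$ for $m=4$; to resum this over $j$ into $4\Re\inpro{v,w}=4\Re\bigl(\sum_j\overline{v_j}w_j\bigr)$ you still need the trace identity $\Re(ab)=\Re(ba)$ of (\ref{realcommute}), which your sketch never invokes. With $w_j$ outside on the right, as you wrote the operator, one would instead obtain $\sum_j4\Re(v_j)\,w_j$, which fails already for $v=w=i$, $d=1$ (it gives $0$, while $4\Re\inpro{v,w}=4$). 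So while the method is the paper's and the first two identities go through, the two $\inpro{D,w}$ cases as described would produce wrong results over $\HH$; the fix is precisely the sandwich bookkeeping --- $w_j$ between $\overline{i_r}$ and the derivative, first sandwich formula for (\ref{linlinderivII})'s first identity, and (\ref{realcommute}) for its second --- that the paper's proof carries out.
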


\begin{proof} 
We use (\ref{irqirsums}).
Expanding gives
$$ \inpro{v,x} = \sum_j \overline{v_j} x_j = \sum_{j,r} \overline{v_j} x_{jr}i_r, \qquad
\inpro{x,v} = \sum_k \overline{x_k} v_k =\sum_{k,s} x_{ks}\overline{i_s} v_k, $$
so that (with $x$ the variable)
$$ \inpro{w,D}\inpro{x,v}
=  \sum_{j,r} \overline{w_j} i_r {\partial\over\partial x_{jr}} 
\sum_{k,s} \overline{i_s} v_k x_{ks}
=  \sum_{j,r} \overline{w_j} i_r \overline{i_r} v_j  = m\inpro{w,v}. $$
Similarly, taking the sum over $r$ for $m=4$, we have
$$ \inpro{w,D}\inpro{v,x}
=  \sum_{j,r} \overline{w_j} i_r {\partial\over\partial x_{jr}} \sum_{k,s} \overline{v_k} i_s x_{ks} 
=  \sum_{j,r} \overline{w_j} i_r \overline{v_j} i_r =  -2\sum_{j} \overline{w_j} v_j =  -2 \inpro{w,v}, $$
with the cases $m=1,2$ following by similar calculations.
This gives (\ref{linlinderiv}).

The remaining equations follow from
$$ \inpro{D,w}\inpro{x,v}
=  \sum_{j,r} \overline{i_r} w_j {\partial\over \partial x_{jr}} \sum_{k,s} \overline{i_s} v_k x_{ks}
=  \sum_{j,r} \overline{i_r} w_j \overline{i_r} v_j 
=  \sum_{j} \Bigl(\sum_r i_r w_j i_r\Bigr) v_j, 
$$
$$ \inpro{D,w}\inpro{v,x}
=  \sum_{j,r} \overline{i_r} w_j {\partial\over \partial x_{jr}} \sum_{k,s} \overline{v_k} i_s x_{ks} 
=  \sum_{j,r} \overline{i_r} w_j \overline{v_j} i_r 
=  \sum_{j} \Bigl(\sum_r i_r w_j \overline{v_j} \overline{i_r}\Bigr),
$$ 
and (\ref{irqirsums}).
\end{proof}

In view of Proposition \ref{linearplanewaveaction} the differential action of a plane wave on
a plane wave is somewhat involved in the quaternionic case. Nevertheless, we have the following.

\begin{lemma} 
\label{planewavederivlemma}
Let $\FF=\RR,\CC,\HH$.
We have the differentiation formula for plane waves
\begin{equation}
\label{planewavederiv}
|\inpro{w,D}|^2 \left(|\inpro{v,\cdot}|^{2t}\right) = 2t(2t+m-2) |\inpro{v,w}|^2 |\inpro{v,\cdot}|^{2t-2}, \qquad v,w\in\Fd,
\end{equation}
where $m=\dim_\RR(\FF)$, 
and, in particular,
\begin{equation}
\label{LaplacePlanewave}
\gD \left( |\inpro{v,\cdot}|^{2t} \right) = 2t(2t+m-2) \norm{v}^2 \inpro{v,\cdot}|^{2t-2}, \qquad v\in\Fd,
\end{equation}
\begin{equation}
\label{planewavenorm}
|\inpro{w,D}|^2 \left(\norm{\cdot}^2\right) = 2m \norm{w}^2, \qquad w\in\Fd. 
\end{equation}
\end{lemma}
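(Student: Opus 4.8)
The plan is to reduce the whole lemma to the single identity (\ref{planewavederiv}) and obtain (\ref{LaplacePlanewave}) and (\ref{planewavenorm}) as corollaries. For the Laplacian, I would first check that $\gD=\sum_{k}|\inpro{e_k,D}|^2$, where $(e_k)$ is the standard basis: expanding $|\inpro{e_k,D}|^2=\inpro{D,e_k}\inpro{e_k,D}=\sum_{r,s}\overline{i_r}i_s\,\partial_{x_{kr}}\partial_{x_{ks}}$ and summing over $k$, the diagonal terms $r=s$ give $\sum_{k,r}\partial_{x_{kr}}^2=\gD$ (since $\overline{i_r}i_r=1$), while the terms $r\ne s$ cancel in pairs because $\overline{i_r}i_s+\overline{i_s}i_r=2\Re(\overline{i_r}i_s)=0$ for distinct units and mixed partials commute. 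Applying (\ref{planewavederiv}) with $w=e_k$ and summing, using $|\inpro{v,e_k}|^2=|v_k|^2$ and $\sum_k|v_k|^2=\norm{v}^2$, yields (\ref{LaplacePlanewave}). Identically, $\norm{\cdot}^2=\sum_k|\inpro{e_k,\cdot}|^2$, so summing the $t=1$ case of (\ref{planewavederiv}) over $k$ gives $\sum_k 2m|\inpro{e_k,w}|^2=2m\norm{w}^2$, which is (\ref{planewavenorm}).

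For (\ref{planewavederiv}) itself, write $g:=|\inpro{v,\cdot}|^2=\inpro{v,\cdot}\inpro{\cdot,v}$, which is real-valued, so the plane wave is $g^t$, and interpret $|\inpro{w,D}|^2=\inpro{D,w}\inpro{w,D}$ (the conjugate operator on the left, matching $|q|^2=\overline{q}q$). Because $g$ is real, the chain rule applies cleanly to the scalar derivatives: writing the operator as $\sum_{\alpha,\beta}c_{\alpha\beta}\partial_\alpha\partial_\beta$ with quaternion coefficients $c_{\alpha\beta}$, and using $\partial_\alpha\partial_\beta g^t=t(t-1)g^{t-2}(\partial_\alpha g)(\partial_\beta g)+tg^{t-1}\partial_\alpha\partial_\beta g$, I get $|\inpro{w,D}|^2 g^t=t(t-1)g^{t-2}Q+tg^{t-1}\bigl(|\inpro{w,D}|^2 g\bigr)$, where $Q=\sum_{\alpha,\beta}c_{\alpha\beta}(\partial_\alpha g)(\partial_\beta g)$. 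The key point is that the first-order factors $\partial_\alpha g$ are real, so they commute through the quaternion coefficients and $Q$ factors as $Q=(\inpro{D,w}g)(\inpro{w,D}g)=|\inpro{w,D}g|^2$, using $\inpro{D,w}g=\overline{\inpro{w,D}g}$ for real $g$.

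It remains to evaluate the two ingredients. A direct Leibniz computation gives $\partial_{x_{jr}}g=\overline{v_j}i_r\inpro{\cdot,v}+\inpro{v,\cdot}\overline{i_r}v_j$, so that $\inpro{w,D}g=\sum_j\overline{w_j}\bigl(\sum_r i_r\overline{v_j}i_r\bigr)\inpro{\cdot,v}+\sum_j\overline{w_j}\bigl(\sum_r i_r\inpro{v,\cdot}\overline{i_r}\bigr)v_j$; the inner sums collapse by (\ref{irqirsums}) and one finds the clean formula $\inpro{w,D}g=2\inpro{w,v}\inpro{v,\cdot}$ (for all $\FF=\RR,\CC,\HH$), whence $Q=|\inpro{w,D}g|^2=4|\inpro{v,w}|^2 g$. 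Applying $\inpro{D,w}$ to this, $\inpro{D,w}\bigl(2\inpro{w,v}\inpro{v,\cdot}\bigr)=2\sum_k\sum_s\overline{i_s}\bigl(w_k\inpro{w,v}\overline{v_k}\bigr)i_s=8\sum_k\Re\bigl(w_k\inpro{w,v}\overline{v_k}\bigr)$ in the quaternionic case (with the analogues for $m=1,2$), and $\Re(ab)=\Re(ba)$ together with $\sum_k\overline{v_k}w_k=\inpro{v,w}=\overline{\inpro{w,v}}$ reduces this to $2m|\inpro{v,w}|^2$. Substituting $Q=4|\inpro{v,w}|^2 g$ and $|\inpro{w,D}|^2 g=2m|\inpro{v,w}|^2$ gives $|\inpro{w,D}|^2 g^t=\bigl(4t(t-1)+2mt\bigr)|\inpro{v,w}|^2 g^{t-1}=2t(2t+m-2)|\inpro{v,w}|^2|\inpro{v,\cdot}|^{2t-2}$, as claimed.

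The main obstacle is the noncommutativity in the product rules: unlike the real and complex cases, the coefficients $\overline{w_j}i_r$ cannot be moved past the factors they multiply, so every Leibniz step leaves expressions of the shape $\sum_r i_r q i_r$ or $\sum_r i_r q\overline{i_r}$ sandwiched around a quaternion $q$. The argument hinges on these collapsing via (\ref{irqirsums}) and on the sandwiched quantities being accessed through their real parts (using $\Re(ab)=\Re(ba)$); it is precisely the real-valuedness of $g$ that rescues the chain-rule splitting and lets $Q$ factor as a genuine modulus-square.
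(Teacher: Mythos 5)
Your proof is correct and follows essentially the same route as the paper: the chain/product-rule splitting of $|\inpro{w,D}|^2 g^t$ (with $g=|\inpro{v,\cdot}|^2$ real-valued) into $t(t-1)g^{t-2}Q+tg^{t-1}\,|\inpro{w,D}|^2g$, the coordinate expansion with the sandwich identities (\ref{irqirsums}) and $\Re(ab)=\Re(ba)$ to get $\inpro{w,D}g=2\inpro{w,v}\inpro{v,\cdot}$ and the second-order term $2m|\inpro{v,w}|^2$, and summation over the standard basis to deduce (\ref{LaplacePlanewave}) and (\ref{planewavenorm}). The only cosmetic differences are that you take the opposite operator order $\inpro{D,w}\inpro{w,D}$ (which you verify directly also yields $2m|\inpro{v,w}|^2$, whereas the paper uses $\inpro{w,D}\inpro{D,w}$ and its Proposition \ref{linearplanewaveaction}), and that you exploit the reality of $g$ to write $Q=|\inpro{w,D}g|^2$ instead of computing $\inpro{D,w}g=2\inpro{\cdot,v}\inpro{v,w}$ separately as in (\ref{techcalcI}).
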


\begin{proof} 
Since $|\inpro{w,D}|^2=\inpro{w,D}\inpro{D,w}$ and $|\inpro{v,\cdot}|^2$ is real-valued, the chain and product rules give
\begin{align}
\label{pwdeI}
|\inpro{w,D}|^2|\inpro{v,\cdot}|^{2t}
&= \inpro{w,D} \left( t|\inpro{v,\cdot}|^{2(t-1)} \inpro{D,w} |\inpro{v,\cdot}|^{2} \right) \cr
&=  t(t-1) |\inpro{v,\cdot}|^{2(t-2)} (\inpro{w,D}|\inpro{v,\cdot}|^{2} )(\inpro{D,w} |\inpro{v,\cdot}|^{2}) \cr
& \qquad +t|\inpro{v,\cdot}|^{2(t-1)} \inpro{w,D} \inpro{D,w} |\inpro{v,\cdot}|^{2}.
\end{align}
A calculation (to follow) gives
\begin{equation}
\label{techcalcI}
\inpro{w,D}|\inpro{v,\cdot}|^2=2\inpro{w,v}\inpro{v,\cdot},
\qquad \inpro{D,w}|\inpro{v,\cdot}|^2=2\inpro{\cdot,v}\inpro{v,w}, 
\end{equation}
\begin{equation}
\label{techcalcII}
\inpro{w,D} \inpro{D,w} |\inpro{v,\cdot}|^{2} = |\inpro{w,D}|^2  |\inpro{v,\cdot}|^{2}
=  2m |\inpro{v,w}|^2, 
\end{equation}
and so (\ref{pwdeI}) simplifies to (\ref{planewavederiv}).

Since $\gD=\norm{D}^2=|\inpro{e_1,D}|^2+\cdots+|\inpro{e_d,D}|^2$, 
we obtain (\ref{LaplacePlanewave}) from (\ref{planewavederiv}), i.e.,
\begin{align*}
\gD \left( |\inpro{v,\cdot}|^{2t}\right)
&= \sum_j |\inpro{e_j,D}|^2 |\inpro{v,\cdot}|^{2t} 
= 2t(2t+m-2) \sum_j |\inpro{v,e_j}|^2 |\inpro{v,\cdot}|^{2t-2} \cr
&= 2t(2t+m-2) \sum_j |v_j|^2 |\inpro{v,\cdot}|^{2t-2}
= 2t(2t+m-2) \norm{v}^2 |\inpro{v,\cdot}|^{2t-2}.
\end{align*}
Since $\norm{\cdot}^2=|\inpro{e_1,\cdot}|^2+\cdots|\inpro{e_d,\cdot}|^2+$,
we obtain (\ref{planewavenorm}) from (\ref{planewavederiv}), i.e.,
$$ |\inpro{w,D}|^2 \left(\norm{\cdot}^2\right) 
= \sum_j |\inpro{w,D}|^2 |\inpro{e_j,\cdot}|^2 
= \sum_j 2m |\inpro{e_j,w}|^2= 2m\sum_j|w_j|^2= 2m \norm{w}^2. $$


To prove (\ref{techcalcI}) and (\ref{techcalcII}), we need to
to use the expansion (\ref{xjr-expansion}), which gives
$$ \inpro{v,x}=\sum_{j=1}^d \overline{v_j} x_j
= \sum_{j=1}^d \sum_{r=1}^m v_{jr}\overline{i_r} \sum_{s=1}^m x_{js} i_s
= \sum_{j,r,s} \overline{i_r} i_s v_{jr} x_{js}. $$
Here the variables $v_{jr},  x_{js}\in\RR$, and so commute with any factor. Thus
\begin{align*}
\inpro{w,D} |\inpro{v,x}|^2
&=  \sum_{j,r,s} \overline{i_r} i_s w_{jr} {\partial\over\partial x_{js}}
\sum_{j_1,r_1,s_1} \overline{i_{r_1}} i_{s_1} v_{j_1r_1} x_{j_1s_1}
\sum_{j_2,r_2,s_2} \overline{i_{r_2}} i_{s_2} x_{j_2r_2} v_{j_2s_2} \cr
& =  \sum_{j,r,s} \sum_{j_1,r_1,s_1\atop j_2,r_2,s_2}
\overline{i_r} i_s \overline{i_{r_1}} i_{s_1} \overline{i_{r_2}} i_{s_2} 
w_{jr} v_{j_1r_1} v_{j_2s_2} {\partial\over\partial x_{js}} (x_{j_1s_1} x_{j_2r_2} ).
\end{align*}
By the product rule, the derivative in the expression above is $\gd_{js,j_1s_1} x_{j_2r_2}+
x_{j_1s_1} \gd_{js,j_2r_2}$. 
Using (\ref{irqirsums}), for $m=4$, we calculate the 
$(j,s)=(j_2,r_2)$ terms to be
\begin{align*} \sum_{j,r,s} \sum_{j_1,r_1,s_1\atop s_2} &
\overline{i_r} i_s \overline{i_{r_1}} i_{s_1} \overline{i_{s}} i_{s_2} 
w_{jr} v_{j_1r_1} v_{j s_2} x_{j_1s_1}
 =  \sum_{j,s,j_1} \overline{w_j} i_s \overline{v_{j_1}} x_{j_1} \overline{i_{s}} v_j 
 =  \sum_{j} \overline{w_j}\Bigl(\sum_s  i_s \inpro{v,x} \overline{i_{s}} \Bigr) v_j  \cr
& =  \sum_{j} \overline{w_j} (4\Re\inpro{v,x}) v_j  
= 4\inpro{w,v} \Re(\inpro{x,v}),
\end{align*}
and those for $(j,s)=(j_1,s_1)$ to be
\begin{align*}
\sum_{j,r,s} \sum_{r_1\atop j_2,r_2,s_2} &
\overline{i_r} i_s \overline{i_{r_1}} i_s \overline{i_{r_2}} i_{s_2} 
w_{jr} v_{jr_1} v_{j_2s_2} x_{j_2r_2} 
=\sum_{j,s,j_2} \overline{w_j} i_s \overline{v_j} i_s \overline{x_{j_2}} v_{j_2}
=\sum_{j} \overline{w_j} \Bigl(\sum_s i_s \overline{v_j} i_s \Bigr) \inpro{x,v} \cr
&=\sum_{j} \overline{w_j} (-2 v_j) \inpro{x,v}
=-2 \inpro{w,v}\inpro{x,v}.
\end{align*}
Hence
\begin{align*}
\inpro{w,D} |\inpro{v,x}|^2
&= 4\inpro{w,v} \Re\inpro{x,v}-2\inpro{w,v}\inpro{x,v} 
= 2\inpro{w,v}( 2\Re\inpro{x,v}-\inpro{x,v}) \cr
&= 2\inpro{w,v} \overline{\inpro{x,v}}
= 2\inpro{w,v} \inpro{v,x},
\end{align*}
and, similarly,
\begin{align*}
\inpro{D,w} |\inpro{v,x}|^2
& =  \sum_{j,r,s} \sum_{j_1,r_1,s_1\atop j_2,r_2,s_2}
\overline{i_r} i_s \overline{i_{r_1}} i_{s_1} \overline{i_{r_2}} i_{s_2} 
w_{js} v_{j_1r_1} v_{j_2s_2} 
(\gd_{jr,j_1s_1}x_{j_2r_2}+\gd_{jr,j_2r_2}x_{j_1s_1}) \cr
& = 4\Re(\inpro{\overline{w},\overline{v}})\inpro{x,v}-2\inpro{x,v}\inpro{w,v}
= 2\inpro{x,v}(2\Re\inpro{w,v}-\inpro{w,v}) \cr
&= 2\inpro{x,v}\overline{\inpro{w,v}}
= 2\inpro{x,v}\inpro{v,w}.
\end{align*}
Finally, by the first equation in (\ref{linlinderiv}), we have
\begin{align}
\label{DwwDcorrectcalc}
|\inpro{D,w}|^2|\inpro{v,\cdot}|^2
&= \inpro{w,D}(\inpro{D,w}|\inpro{v,\cdot}|^2)
= \inpro{w,D}(2\inpro{\cdot,v}\inpro{v,w}) \cr
&= 2m\inpro{w,v}\inpro{v,w}=2m|\inpro{v,w}|^2.
\end{align}
\end{proof}

We note the subtlety in the calculations above, e.g., the product rule holds if one factor is real-valued, 
but not if both are $\HH$-valued, and the differential operator $\inpro{w,D}$ does not commute with quaternion
scalars.


\begin{example}
\label{zonalinHomtt}
It follows from (\ref{planewavederiv}), that $|\inpro{w,D}|^2$ maps $\Hom_{\Fd}(t+1,t+1)$ to $\Hom_{\Fd}(t,t)$.
Using $\inpro{D,w}\norm{\cdot}^2=2\inpro{\cdot,w}$ and $\inpro{w,D}\norm{\cdot}^2=2\inpro{w,\cdot}$, 
we have
\begin{align*}
|\inpro{w,D}|^2\left(\norm{\cdot}^{2t}\right)
&= \inpro{w,D} \inpro{D,w} \left(\norm{\cdot}^{2t}\right)
= \inpro{w,D}\left(t\norm{\cdot}^{2(t-1)}(2\inpro{\cdot,w})\right)  \cr
&= 2t\norm{\cdot}^{2(t-1)} m\inpro{w,w}
+ t(t-1)\norm{\cdot}^{2(t-2)} (2\inpro{w,\cdot})(2\inpro{\cdot,w}) \cr
&= 2mt \norm{w}^2\norm{\cdot}^{2(t-1)} 
+ 4t(t-1) |\inpro{w,\cdot}|^2 \norm{\cdot}^{2(t-2)}, \qquad t\ge1,
\end{align*} 
so that $|\inpro{w,\cdot}|^2 \norm{\cdot}^{2(t-1)}\in\Hom_{\Fd}(t,t)$.
Continuing in this way, one obtains that
$|\inpro{v,\cdot}|^{2r}\norm{\cdot}^{2t-2r}\in\Hom_{\Fd}(t,t)$, $0\le r\le t$
(see \cite{MW20} for details).
\end{example}

\begin{theorem}
\label{Hdreproinproexist}
Let $\FF=\RR,\CC,\HH$.
There is a unique inner product on $\Hom_{\Fd}(t,t)$ for which 
$|\inpro{z,w}|^{2t}$ is the reproducing kernel. It is given by
\begin{equation}
\label{apolarFdefn}
\inpro{f,g}_\FF := {1\over b_{t,m}} \overline{f}(D) g, \qquad
b_{t,m}:=\prod_{j=1}^t 2j(2j+m-2).
\end{equation}
\end{theorem}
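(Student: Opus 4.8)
The plan is to verify the explicit formula (\ref{apolarFdefn}) directly and then identify it with the inner product already produced by Theorem \ref{tightframeHomtttheorem}. Uniqueness is immediate: a reproducing kernel determines the inner product of a finite-dimensional reproducing kernel Hilbert space, so there is at most one inner product on $\Hom_{\Fd}(t,t)$ making $|\inpro{z,w}|^{2t}$ the reproducing kernel, and Theorem \ref{tightframeHomtttheorem} supplies one. It therefore suffices to show that the form $B(f,g):=\frac{1}{b_{t,m}}\overline{f}(D)g$ satisfies the reproducing property $B(K_w,f)=f(w)$, where $K_w=|\inpro{\cdot,w}|^{2t}$, for all $f\in\Hom_{\Fd}(t,t)$.

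The first step is to identify the differential operator $K_w(D)$. Since $K_w(z)=|\inpro{z,w}|^{2t}=(|\inpro{w,z}|^2)^t$ is real-valued, $\overline{K_w}=K_w$, and I would note that $|\inpro{w,z}|^2$ is a genuine real quadratic form in the $md$ real coordinates $z_{jr}$, so $K_w$ has \emph{real} coefficients as a polynomial in those coordinates. For real-coefficiented polynomials the assignment $f\mapsto f(D)$ is multiplicative, because the operators $\partial/\partial x_{jr}$ commute and real scalars are central. Combined with the identity $(|\inpro{w,\cdot}|^2)(D)=\inpro{w,D}\inpro{D,w}=|\inpro{w,D}|^2$ — which I would confirm by matching the expansions of $\inpro{w,z}\inpro{z,w}$ and $\inpro{w,D}\inpro{D,w}$ term by term in the real variables — this gives
$$ \overline{K_w}(D)=K_w(D)=\bigl(|\inpro{w,D}|^2\bigr)^t. $$

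Because $\{|\inpro{v,\cdot}|^{2t}:v\in\Fd\}$ spans $\Hom_{\Fd}(t,t)$, it then suffices to treat $f=|\inpro{v,\cdot}|^{2t}$. Iterating the plane-wave formula (\ref{planewavederiv}) of Lemma \ref{planewavederivlemma} exactly $t$ times, each stage peeling off a real factor $2s(2s+m-2)$ (for $s=t,t-1,\ldots,1$) and a real factor $|\inpro{v,w}|^2$, all of which commute freely, yields
$$ \bigl(|\inpro{w,D}|^2\bigr)^t\bigl(|\inpro{v,\cdot}|^{2t}\bigr)=\Bigl(\prod_{s=1}^{t}2s(2s+m-2)\Bigr)|\inpro{v,w}|^{2t}=b_{t,m}\,|\inpro{v,w}|^{2t}, $$
so $B(K_w,f)=|\inpro{v,w}|^{2t}=f(w)$, the reproducing property on a spanning set. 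To pass to all of $\Hom_{\Fd}(t,t)$ and conclude $B$ is the desired inner product, I would check the scalar behaviour $B(f,g\ga)=B(f,g)\ga$ and $B(f\ga,g)=\overline{\ga}\,B(f,g)$ (using $\overline{f\ga}=\overline{\ga}\,\overline{f}$ and that the real operators $\partial/\partial x_{jr}$ pass through right scalars). Since $\{K_w\}$ spans over $\FF$ and both $B(\cdot,f)$ and the inner product of Theorem \ref{tightframeHomtttheorem} agree on every $K_w$ with the same conjugate-linear scalar dependence, the two forms coincide; hence $B$ is a Hermitian, positive-definite inner product with reproducing kernel $|\inpro{z,w}|^{2t}$. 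As a sanity check, $b_{t,1}=(2t)!$ and $b_{t,2}=2^{2t}(t!)^2$ recover the real and complex apolar inner products of Examples \ref{HomttRd} and \ref{HomttCd}.

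The step I expect to be the main obstacle is the identity $K_w(D)=(|\inpro{w,D}|^2)^t$: in the quaternionic setting $f\mapsto f(D)$ is not multiplicative in general, and $\inpro{w,D}$ does not commute with quaternion scalars. The resolution is exactly the observation that $|\inpro{w,\cdot}|^2$ is real-coefficiented in the $md$ real variables, which restores commutativity and multiplicativity for this particular power and makes the iteration of Lemma \ref{planewavederivlemma} legitimate.
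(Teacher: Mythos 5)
Your proposal is correct, and its computational core is the same as the paper's: iterating the plane-wave formula (\ref{planewavederiv}) of Lemma \ref{planewavederivlemma}, which yields $|\inpro{w,D}|^{2t}\,|\inpro{v,\cdot}|^{2t}=b_{t,m}|\inpro{v,w}|^{2t}$ and hence the reproducing property on the spanning set $\{K_v\}$. Where you diverge is in how the remaining axioms are secured. The paper verifies directly that $\dinpro{f,g}=\overline{f}(D)g$ is an inner product by expanding in the real monomial basis, obtaining the weighted Euclidean form $\sum_{|\ga|=2t}\ga!\,\overline{f_\ga}g_\ga$ as in (\ref{monversionapolarFinpro}); this makes the proof self-contained and gives positive definiteness on all of $\Pol_{2t}(\Fd)$ at a glance. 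You instead establish sesquilinearity of $B$ by hand and then import Hermitian symmetry and positivity by identifying $B$ with the canonical tight-frame inner product of Theorem \ref{tightframeHomtttheorem}, since both agree on all pairs of kernels $(K_w,K_v)$ with value $K_v(w)$. That identification is legitimate and non-circular (Theorem \ref{tightframeHomtttheorem} is proved independently), but it makes your argument dependent on that theorem where the paper's is not. On the other hand, you buy two things the paper leaves tacit: an explicit justification that $\overline{K_w}(D)=\bigl(|\inpro{w,D}|^2\bigr)^t$ — the paper uses this silently, and your observation that $|\inpro{w,\cdot}|^2$ has \emph{real} coefficients in the $md$ real coordinates, restoring multiplicativity of $f\mapsto f(D)$ despite the noncommutativity of $\HH$, is exactly the right resolution — and an explicit uniqueness argument (the kernels span, so $\inpro{K_w,K_v}=K_v(w)$ together with sesquilinearity forces the form), which the theorem asserts but the paper's proof never addresses. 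Your sanity checks $b_{t,1}=(2t)!$ and $b_{t,2}=2^{2t}(t!)^2$ against Examples \ref{HomttRd} and \ref{HomttCd} are also correct.
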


\begin{proof} We first show that $\dinpro{f,g}:=\overline{f}(D)g$
defines an inner product on $\Pol_{2t}(\Fd)$, and hence
on $\Hom_{\Fd}(t,t)$. Using the notation of (\ref{xjr-expansion}),
we may write 
$f\in\Pol_{2t}(\Fd)$ as
$$ f=\sum_{|\ga|=2t} f_\ga m_\ga, \qquad
m_\ga(x) := (x_{jr})_{1\le j\le d,1\le r\le m}^\ga, \quad f_\ga\in\FF. $$
Then
\begin{equation}
\label{monversionapolarFinpro}
\dinpro{f,g}
= \sum_{|\ga|=2t} \overline{f_\ga} m_\ga(D) \sum_{|\gb|=2t} g_\gb m_\gb
= \sum_{|\ga|=2t} \ga! \overline{f_\ga} g_\ga,
\end{equation}
which is clearly the (weighted) Euclidean inner product 
(and hence is an inner product).

By $t$ applications of (\ref{planewavederiv}) of Lemma \ref{planewavederivlemma}, 
we have
\begin{align*}
|\inpro{w,D}|^{2t} |\inpro{v,\cdot}|^{2t}
&= |\inpro{w,D}|^{2(t-2)} \left(|\inpro{w,D}|^2|\inpro{v,\cdot}|^{2t}\right) \cr
&= 2t(2t+m-2) |\inpro{v,w}|^2 \left( |\inpro{w,D}|^{2(t-2)} |\inpro{v,\cdot}|^{2t-2}\right) \cr
&= \cdots = b_{t,m} |\inpro{v,w}|^{2t}, 
\end{align*}
i.e., with $K_w(z)=|\inpro{z,w}|^{2t}$ and $f=|\inpro{v,\cdot}|^{2t}$,
$$ \inpro{K_w,f}_\FF = |\inpro{v,w}|^{2t} =f(w), \qquad\forall w\in\Fd, $$
so that $|\inpro{w,z}|^{2t}$ is the reproducing kernel for
$\inpro{\cdot,\cdot}_\FF = (1/b_{t,m})\dinpro{\cdot,\cdot}$.
\end{proof}

We will refer to (\ref{apolarFdefn}) as the {\bf apolar} inner product, since it coincides with the
apolar (Bombieri) inner product in the cases $\FF=\RR,\CC$,
i.e.,
$$ \inpro{f,g}_\RR= {1\over(2t)!}\overline{f}(D)g, \quad
\inpro{f,g}_\CC= {1 \over 2^{2t} t!^2 }\overline{f}(D)g, \quad
\inpro{f,g}_\HH= {1 \over 2^{2t}t!(t+1)!  }\overline{f}(D)g, $$
and see (\ref{BombieriInnerprod}) and (\ref{BombieriInnerprodCII}).
Again, we observe this is not the inner product given by integration on the (quaternionic) sphere.

\begin{example} We have
$$ \inpro{|q_j|^{2t},|q_j|^{2t}}_\HH = \inpro{K_{e_j},K_{e_j}}_\HH = |\inpro{e_j,e_j}|^{2t} =1, 
\qquad 1\le j\le d, $$
so the polynomials $p_j:q\mapsto q\mapsto |q_j|^{2t}$ in the
normalised tight frames of Theorem \ref{tightframeHomtttheorem} have unit norm, 
and hence (see \cite{W18} Exercise 2.4) span orthogonal one-dimensional subspaces of $\Hom_{\Hd}(t,t)$.
By evaluating $\inpro{|x_1|^{2t},|x_j|^{2t}}_\FF$ from (\ref{monversionapolarFinpro}), one gets the identity
$$ \sum_{|\ga|=t\atop\ga\in\ZZ_+^m} (2\ga)!{t\choose\ga}^2  = b_{t,m}
=\prod_{j=1}^t 2j(2j+m-2). $$
\end{example}

\section{Conclusion}

We have proved the quaternionic analogue of the Welch-Sidlenikov inequality
on the spacing of vectors/lines on unit sphere
(Theorem \ref{CSgeneralvarthm}),
and shown that equality in it corresponds to a cubature rule for the 
unitarily invariant polynomial space $\Hom_{\Hd}(t,t)$, which we call a
spherical $(t,t)$-design.
Consequences of this unified development for the real, complex and quaternionic cases include:

\begin{itemize}
\item A proof that projective spherical $t$-designs are precisely the spherical $(t,t)$-designs.
Since $(t,t)$-designs are cubature rules for the path-connected topological space $\SS$, 
it then follows from \cite{SZ84} that these exist for any given $t$. 
\item The variational characterisation gives a simple condition
for being a projective spherical $t$-design (Corollary \ref{regularschemecor}).
\item The variational characterisation allows for numerical constructions of
spherical $(t,t)$-designs (Examples \ref{numericalsearchI} and \ref{numericalsearchII}).
\item The polynomial space $\Hom_{\Fd}(t,t)$ plays a key role. In the real and complex cases
it is well understood with the bases given in Examples \ref{HomttRd} and \ref{HomttCd}
implying that
$$ \dim(\Hom_{\Rd}((t,t)) = {d+2t-1\choose 2t}, \qquad
\dim(\Hom_{\Cd}((t,t)) = {d+t-1\choose t}^2.  $$
We did not present a basis for $\Hom_{\Hd}(t,t)$, instead using a normalised tight frame
(Theorem \ref{tightframeHomtttheorem}). It can be shown (see \cite{MW20}) that
$$ \dim(\Hom_{\HH^d}(t,t)) = {1\over t+2d-1}  {t+2d-1\choose t}{t+2d-1\choose t+1}. $$
\end{itemize}

Directions for further investigation include results for the octonionic sphere, 
such as the possibility of a Welch-Sidlenikov inequality, and cubature rules
for unitarily invariant subspaces of $\Hom_{\Hd}(t,t)$.




\vskip2truecm

\bibliographystyle{alpha}
\bibliography{references}
\nocite{*}

\vfil\eject

\end{document}